\newtheorem{theorem}{Theorem}
\newtheorem{lemma}{Lemma}
\newtheorem{proposition}{Proposition}
\newtheorem{remark}{Remark}
\newcommand{\cH}{{\cal H}}
\newcommand{\ket}[1]{|#1\rangle}
\newcommand{\Tr}{{\textrm Tr}\,}
\newcommand{\sH}{{H}}
\newcommand{\sX}{{X}}
\newcommand{\sZ}{{Z}}
\newcommand{\sY}{{Y}}
\newcommand{\sA}{{A}}
\newcommand{\sD}{{D}}
\newcommand{\sU}{{U}}
\newcommand{\bF}{\mathbb{F}}
\def\Label#1{\label{#1}\ [\ \text{#1}\ ]\ }
\def\Label{\label}
\def\QED{\mbox{\rule[0pt]{1.5ex}{1.5ex}}}
\def\endproof{\hspace*{\fill}~\QED\par\endtrivlist\unskip}
 \newenvironment{proofof}[1]{\vspace*{5mm} \par \noindent
         \quad\textit{ Proof of #1:\hspace{2mm}}}{\endproof}
\begin{document}
\title{Self-guaranteed measurement-based quantum computation}
\author{Masahito Hayashi}\email{masahito@math.nagoya-u.ac.jp}
\affiliation{Graduate School of Mathematics, Nagoya University, Furocho, Chikusa-ku, Nagoya 464-8602, Japan}
\affiliation{Centre for Quantum Technologies, National University of Singapore, 3 Science Drive 2, Singapore 117543}

\author{Michal Hajdu\v{s}ek}\email{cqtmich@nus.edu.sg}
\affiliation{Centre for Quantum Technologies, National University of Singapore, 3 Science Drive 2, Singapore 117543}

\begin{abstract}
In order to guarantee the output of a quantum computation, we usually assume that the component devices are trusted.
However, when the total computation process is large, it is not easy to guarantee the whole system when we have scaling effects, unexpected noise, or unaccounted correlations between several subsystems.
If we do not trust the measurement basis nor the prepared entangled state, we do need to be worried about such uncertainties.
To this end, we proposes a ``self-guaranteed'' protocol for verification of quantum computation under the scheme of measurement-based quantum computation where no prior-trusted devices (measurement basis nor entangled state) are needed.
The approach we present enables the implementation of verifiable quantum computation using the measurement-based model in the context of a particular instance of delegated quantum computation where the server prepares the initial computational resource and sends it to the client who drives the computation by single-qubit measurements.
Applying self-testing procedures we are able to verify the initial resource as well as the operation of the quantum devices, and hence the computation itself.
The overhead of our protocol scales as the size of the initial resource state to the power of 4 times the natural logarithm of the initial state's size.
\end{abstract}

\maketitle

\section{Introduction}
Quantum computation offers a novel way of processing information and promises solution of some classically intractable problems ranging from factorization of large numbers \cite{Shor1994} to simulation of quantum systems \cite{Lloyd1996}.
However, as quantum information processing technologies improve the performance of quantum devices composed of ion traps and superconducting qubits \cite{Harty2014,Barends2014}, a natural question arises; ``How can we guarantee the computation outcome of a prepared quantum computation machine?''
The solution of this problem is strongly desired in the context of characterization, verification and validation  of quantum systems (QCVV), which is actively addressed in recent studies \cite{QCVV1,QCVV2}.
For problems such as factorization, this does not present an issue as verification takes the form of simple multiplication of numbers.
However, we cannot deny a possibility that the constructed quantum device suffers from unexpected noise or unaccounted correlations between several subsystems resulting from our insufficient experimental control when implementing the quantum computer.
That is, we need to guarantee (verify) the outcome without any noise model.
This task is called the verification of quantum computation \cite{Reichardt2013nature,hajduvsek2015device,McKague2016,fitzsimons2015post,Hangleiter2017,McCutcheon2016,Aolita2015,mantri2017flow,Barz2013,Broadbent2009,fitzsimons2012unconditionally,fitzsimons2016private,Gheorghiu2015,Morimae2014verification,gheorghiu2017review}.

The concept of verifying a quantum computation is quite different from quantum error correction.
In quantum error correction we start with a noise model that can adversely affect the computation and devise quantum codes to counteract this noise provided its strength remains below a certain threshold. In verification of quantum computation we do not make any assumptions about the noise.
The prepared states and measurement devices may be behaving ideally or they may be affected by noise.
The goal of verification is to ascertain whether the quantum states and measurement devices behave closely according to specifications and how this deviation affects the output of the computation, \textit{without} assuming any noise model.
This is crucial from an experimental point of view as it allows us to test quantum devices and guarantee their reliable operation.

For this purpose, the verification of quantum computation needs to satisfy the following two requirements. 
One is \textit{detectability} which means that  if the state or the measurement device is far from the ideal one, we reject it with high probability. 
In this stage, no assumption on the underlying noise model should be made. 
The other is \textit{acceptability} which means that the ideal state and the ideal measurement device can pass the test with high probability. 
Both requirements are needed to characterize performance of test in statistical hypothesis testing \cite{Lehmann2008,fujii2016verifiable}.

We need to clarify whether we have already verified the device or not.
To address this issue, a device is called \textit{trusted} when we have already verified it.
Otherwise, it is called \textit{untrusted}.
This task may seem daunting at first, particularly when considered in the context of quantum circuit model \cite{Nielsen2000}.
In this model, the computation takes form of a sequence of local and multi-local unitary operations applied to the quantum state resulting in a quantum output that is finally measured out to yield the classical result of the computation.
In order to verify the correctness of the output it would appear that one needs to keep track of the entire dynamics, effectively classically simulating the quantum computation.
This can of course be achieved only for the smallest of quantum systems due to the exponential increase in the dimensionality of the Hilbert space with increasing system size.
Measurement-based model of quantum computation (MBQC), is equivalent to the quantum circuit model but uses non-unitary evolution to drive the computation \cite{Raussendorf2001,Raussendorf2003,Briegel2009}.
In this model, the computation begins with preparation of an entangled multi-qubit resource state and proceeds by local projective measurements on this state that use up the initial entanglement.
In order to implement the desired evolution corresponding to the unitary from the circuit model, the measurements must be performed in an adaptive way where future measurement bases depend on previous measurement outcomes which imposes a temporal ordering on the measurements.

The initial proposal of MBQC in \cite{Raussendorf2001} considered a cluster state as the initial state \cite{briegel2001persistent} and measurements in the $X$-$Y$ plane of the qubit's Bloch sphere at an arbitrary angle along with $Z$ measurements.
It has been recently shown that $Z$ measurements are in fact not necessary \cite{mantri2016universality}.
We consider measurements of $X$, $Z$ and $X\pm Z$ that are approximatly universal when paired with a triangular lattice as the initial resource state \cite{mhalla2013graph}.
For trusted measurement devices, the computation outcome can be guaranteed only by verifying the initial entangled multi-qubit resource state \cite{Hayashi2015} using stabilizer measurements.
However, for untrusted devices this method alone is not sufficient.

Our task is guaranteeing the computation outcome \textit{without} trusting the measurement devices as well as the initial entangled resource state.
To achieve this, we employ self-testing techniques to guarantee prepared states as well as to certify the operation of quantum devices.
Self-testing, originally proposed in \cite{Mayers1998,Mayers2004}, is a statistical test that compares measured correlations with the ideal ones and based on the closeness of these two cases draws conclusions whether the real devices behave as instructed under a particular definition of equivalence.
In any run of the computation we assume that the prepared physical states and devices are untrusted and therefore need testing.
Self-testing does not make any artificial assumptions about the Hilbert space structure of the devices or the measurement operators corresponding to classical outcomes observed.

To achieve verification of quantum computation, we need to establish a self-test for a triangular graph state as well as measurements mentioned in the above paragraph.
McKague proposed a self-testing procedure for such a graph state in \cite{McKague2011} along with measurements in the $X$-$Z$ plane.
However, this method requires many copies of the $n$-qubit graph state scaling as $O(n^{22})$ and therefore is not possible with current or near-future quantum technologies.

In this paper, with feasible experimental realization in mind, we propose a self-testing procedure for a triangular graph state along with measurements of $X$, $Z$ and $X\pm Z$.
One of the main differences between MBQC and the quantum circuit model is the clear split between preparation of the initial entangled resource state and the computation itself.
This property suggests a natural approach to guaranteeing the outcome of the computation by splitting the verification process into two parts.
Firstly, we have to verify the initial entangled multi-qubit resource state.
Secondly, we guarantee the correct operations of the measurement devices that drive the computation.
To realize this approach, we begin by introducing a protocol that reduces self-testing for a triangular graph state to a combination of self-tests for a Bell state.

Original proposals of Mayers and Yao \cite{Mayers1998,Mayers2004} have considered self-testing of a Bell state.
The method of \cite{McKague2010} is based on the Mayers-Yao test while \cite{McKague2012} discusses methods based on the Mayers-Yao test as well as the CHSH test.
These two approaches require relatively small number of measurement settings.
However, direct application of these methods to our protocol results in a huge number of required copies of the graph state.
To resolve this issue, we propose a different method for self-testing of a Bell state, which has better precision as previous methods.

\section{Self-testing of measurements based on two-qubit entangled state}
As the first step, we consider a self-testing protocol of local measurements on the untrusted system ${\cal H'}_1$ when the untrusted state $|\Phi'\rangle$ is prepared on the bipartite system ${\cal H}'_1\otimes{\cal H}'_2$.
The trusted state corresponding to $|\Phi'\rangle$ is $(|0,+\rangle + |1,-\rangle)/\sqrt{2}$. Note that even though the trusted system is a two-qubit state, we do not assume that either of the untrusted systems ${\cal H}'_1$ or ${\cal H}'_2$ are $\mathbb{C}^{2}$.
In the rest of our paper we denote untrusted states and operators with primes, such as $\ket{\psi'}$ and $X'$, in order to distinguish them from trusted states and devices which have no primes.
Our protocol satisfies the following requirements related to our self-testing protocol for three-colorable graph state.
\begin{description}
\item[(1-1)]
Identify measurements of $X_1$, $Z_1$ and $(X_1\pm Z_1)/\sqrt{2}$ within a constant error $\epsilon$.

\item[(1-2)]
Measure $X'_1$, $Z'_1$, $A(0)'_1$ and $A(1)'_1$ on the system ${\cal H}'_1$, where $A(i)'_1:=[X'_1+(-1)^iZ'_1]/\sqrt{2}$.

\item[(1-3)]
Measure only $X'_2$ and $Z'_2$ on the system ${\cal H}'_2$.

\item[(1-4)]
Prepare only $O(\delta^{-4})$ samples for the required precision level $\delta$, whose definition will be given latter.
\end{description}
Requirement \textbf{(1-1)} is needed for universal computation based on measurement-based quantum computation \cite{mhalla2013graph}.
Three-colorable graph states can be partitioned into three subsets of non-adjacent qubits.
In the rest of our paper, we refer to one of these subsets as black qubits (B), the second subset is referred to as white qubits (W) and the final subset are red qubits (R).
To realize the self-guaranteed MBQC of $n$-qubit three-colorable graph state with resource size $O(n^4 \log n)$, we need the requirement \textbf{(1-4)}.
Indeed, McKague et al. \cite{McKague2012} already gave a self-testing protocol for the Bell state.
However, their protocol requires resource size that scales as $O(n{}^8)$ (Remark 1 of Appendix \ref{app:Bell}).

\begin{figure}[t]
	\includegraphics[scale=0.70]{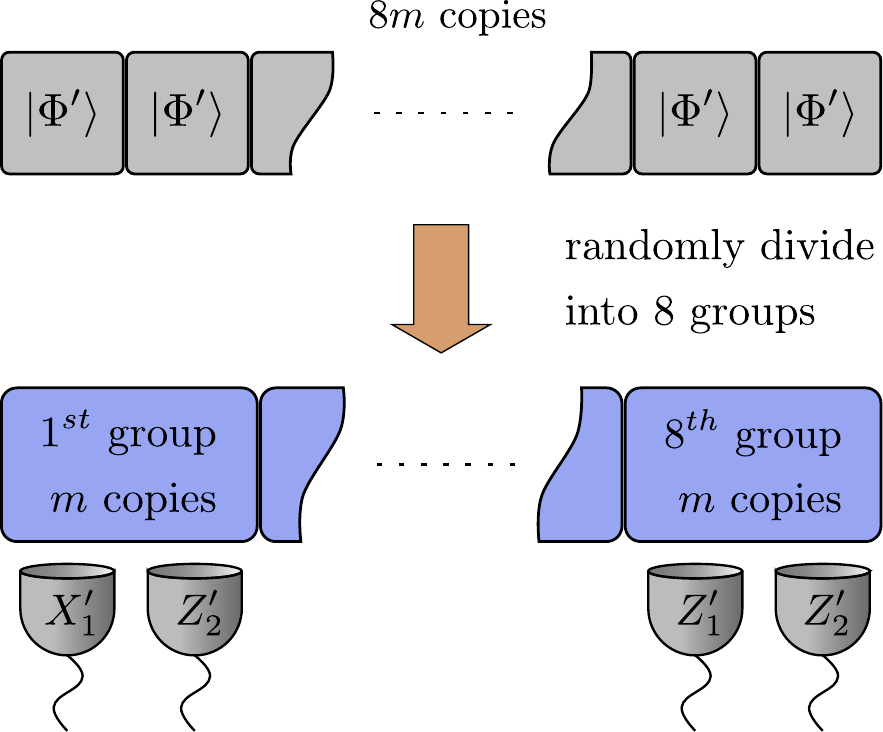}{}{}
	\caption{Representation of the self-testing procedure for the state $(\ket{0,+}+\ket{1,-})/\sqrt{2}$.
	We prepare $8m$ copies of this state which are then randomly divided into 8 groups.
	Each group is measured as described in \textbf{(2-3)} and \textbf{(2-4)}.
	There are 4 measurement settings for system $\mathcal{H}'_1$ and 2 measurement settings for system $\mathcal{H}'_2$.
	Each group is measured by one device acting on system $\mathcal{H}'_1$ and one device acting on system $\mathcal{H}'_2$.}
	\label{fig:figure1}
\end{figure}

The self-testing procedure is illustrated in FIG.~\ref{fig:figure1}.
We prepare $8m$ copies of the initial state and split them randomly into 8 groups that are 
then measured to test the correlations.
The procedure is described as follows and is denoted by Test {\bf (2)}:
\begin{description}
\item[(2-1)]
Prepare $8m$ states $|\Phi'\rangle$.

\item[(2-2)]
Randomly divide $8m$ blocks into 8 groups, in which, the 1st - 8th groups are composed of $m$ blocks.

\item[(2-3)]
Measure
$X'_1$, $Z'_1$,
$A(0)'_1$, $A(0)'_1$, $A(1)'_1$, $A(1)'_1$,
$X'_1$, $Z'_1$ on the system ${\cal H}'_1$
for the 1st - 8th groups.

\item[(2-4)]
The corresponding measurements on ${\cal H}'_2$ for the 8 groups are
$Z'_2$, $X'_2$,
$Z'_2$, $X'_2$, $Z'_2$, $X'_2$,
$X'_2$, and $Z'_2$.

\item[(2-5)]
Based on the above measurements, we check the following 5 inequalities for 8 average values:
\begin{align}
&Av [X'_1 Z'_2] = 1, \quad Av [Z'_1 X'_2] = 1, \label{e1}\\
&Av [A(0)'_1(Z'_2+X'_2)] \ge \sqrt{2}- \frac{c_1}{\sqrt{m}},\label{e2}\\
&Av [A(1)'_1(Z'_2-X'_2)]\ge \sqrt{2}- \frac{c_1}{\sqrt{m}},\label{e3}\\
&\big|Av [X'_1 X'_2+ Z'_1 Z'_2]\big| \le \frac{c_1}{\sqrt{m}}.\label{e4}
\end{align}
Here, for example, the average value in (\ref{e2}) is calculated from the outcomes of the 3rd and 4th groups.
\end{description}

This leads to the following theorem, which is shown in Appendix \ref{app:Bell}.
\begin{theorem}
Given a significance level $\alpha$ and an acceptance probability $\beta$, there exists a pair of positive real numbers $c_1$ and $c_2$ satisfying the following condition.
If the state $(|0,+\rangle + |1,-\rangle)/\sqrt{2}$ and measurement are prepared with no error, 
Test {\bf (2)} of the above $c_1$ is passed with probability $\beta$.
Once Test {\bf (2)} of the above $c_1$ is passed, 
we can guarantee, with significance level $\alpha$, that there exists an isometry $U:{\cal H'}_1\to  {\cal H}_1$ such that
\begin{align}
\| U X'_1 U^\dagger - X_1 \| & \le \delta, ~
\| U A(0)'_1 U^\dagger - A(0)_1 \| \le \delta ,\label{e5}\\
\| U Z'_1 U^\dagger - Z_1 \| & \le \delta, ~
\| U A(1)'_1 U^\dagger - A(1)_1 \| \le \delta ,\label{e6}
\end{align}
where $\delta:=c_2 m^{-1/4}$, which is called the required precision level.
\end{theorem}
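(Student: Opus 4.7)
The plan has three phases: (i) convert the empirical averages coming from Test \textbf{(2)} into population expectation values on $\ket{\Phi'}$ via statistical concentration, (ii) translate these approximate correlation identities into state-level operator equations, and (iii) build a SWAP-type isometry $U$ and upgrade the state-level bounds into the operator-norm bounds claimed in (\ref{e5})--(\ref{e6}). First, since each of the 8 groups consists of $m$ independent $\pm 1$--valued measurement outcomes, Hoeffding's inequality guarantees that, with probability at least $1-\alpha$, every empirical average differs from the corresponding expectation value $\langle\Phi'|\cdots|\Phi'\rangle$ by at most $c_1/\sqrt{m}$ for a suitable constant $c_1 = c_1(\alpha)$. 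Applying the same bound in the opposite direction to the ideal values then lets us tune $c_1$ so that the ideal state and measurements pass with probability $\beta$. Thus, upon passing, (\ref{e1})--(\ref{e4}) hold with empirical averages replaced by expectations, at an additive cost of $O(1/\sqrt{m})$.

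Second, I exploit the fact that $X'_1 Z'_2$ is a $\pm 1$ observable with expectation value near $1$: the identity $\|(X'_1 Z'_2 - I)\ket{\Phi'}\|^2 = 2 - 2\langle X'_1 Z'_2\rangle$ gives $\|X'_1\ket{\Phi'}-Z'_2\ket{\Phi'}\| = O(m^{-1/2})$ and similarly $\|Z'_1\ket{\Phi'}-X'_2\ket{\Phi'}\| = O(m^{-1/2})$. Summing (\ref{e2}) and (\ref{e3}) produces a CHSH-type value of $2\sqrt{2}-O(m^{-1/2})$, i.e.\ a near-Tsirelson violation. A standard CHSH self-testing argument (Jordan's lemma plus perturbation analysis) then yields an approximate anticommutation $\{X'_2, Z'_2\}\ket{\Phi'} = O(m^{-1/4})$ on the second party, the square-root loss being the source of the eventual $m^{-1/4}$ rate. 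Inequality (\ref{e4}) controls the residual $X'\otimes X' + Z'\otimes Z'$ correlation, pinning the target state to $(\ket{0,+}+\ket{1,-})/\sqrt 2$ rather than any other maximally entangled state, and in particular excluding unwanted relative phases in the SWAP extraction below.

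Third, I follow the McKague-style construction: using $X'_1$ and $Z'_1$ after symmetrising them through $A(0)'_1,A(1)'_1$ (to enforce anticommutation up to $O(m^{-1/4})$), I define a partial-SWAP isometry $U:{\cal H}'_1\to{\cal H}_1$ that extracts a qubit onto a fresh ancilla. The transfer identities $X'_1\ket{\Phi'}\approx Z'_2\ket{\Phi'}$, $Z'_1\ket{\Phi'}\approx X'_2\ket{\Phi'}$, together with the approximate Pauli algebra on the second party, imply that $UX'_1U^\dagger$, $UZ'_1U^\dagger$, $UA(0)'_1U^\dagger$, $UA(1)'_1U^\dagger$ coincide with $X_1$, $Z_1$, $A(0)_1$, $A(1)_1$ on the image of $\ket{\Phi'}$, with errors of size $O(m^{-1/4})$. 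To promote this state-dependent closeness to the operator-norm bounds stated in the theorem, I use that the isometry and the residual projector errors can be arranged so that the image equals ${\cal H}_1$ (extending $U$ arbitrarily off the support of $\ket{\Phi'}$), and that all error terms in the SWAP identities propagate additively in the operator norm.

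The hardest step will be the third one: converting the $O(m^{-1/4})$ state-level closeness uniformly into operator-norm closeness for all four observables with a single common isometry $U$. The delicate point is that the anticommutator bound obtained from the Tsirelson gap is only $O(m^{-1/4})$, and propagating it through the SWAP construction without incurring a further square-root loss, while ensuring that the same $U$ works simultaneously for $X_1$, $Z_1$, and both $A(i)_1$, requires a careful book-keeping of the perturbative error terms. The remaining ingredients---the Hoeffding step, the near-perfect correlation bounds, and the verification of the acceptance probability $\beta$---are standard.
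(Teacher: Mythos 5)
Your overall architecture (concentration $\to$ expectation-value conditions $\to$ McKague-style SWAP isometry) matches the paper's, but there is a genuine gap in your statistical step that breaks the claimed rate $\delta=c_2 m^{-1/4}$. You apply Hoeffding uniformly to all the averages, obtaining accuracy $O(m^{-1/2})$ for every expectation value, including $\langle X'_1Z'_2\rangle$ and $\langle Z'_1X'_2\rangle$. But the transfer of $A(0)'_1$ and $A(1)'_1$ to $(Z'_2\pm X'_2)/\sqrt2$ necessarily passes through an estimate of the form
\begin{align}
\Bigl\|\Bigl(A(0)'_1-\tfrac{Z'_2+X'_2}{\sqrt2}\Bigr)\ket{\psi'}\Bigr\|^2
\le \sqrt2\,\epsilon_4+\tfrac12\epsilon_5+\sqrt{\epsilon_2}+\sqrt{\epsilon_3},
\end{align}
where $1-\epsilon_2$ and $1-\epsilon_3$ lower-bound $\langle X'_1Z'_2\rangle$ and $\langle Z'_1X'_2\rangle$: the party-2 anticommutator $\langle\{X'_2,Z'_2\}\rangle$ is controlled by relaying it to $\langle X'_1X'_2+Z'_1Z'_2\rangle$ (test \eqref{e4}) at a cost of $\sqrt{\epsilon_2}+\sqrt{\epsilon_3}$. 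With your $\epsilon_2,\epsilon_3=O(m^{-1/2})$ this chain yields only $\delta=O(m^{-1/8})$ for the $A(i)$ bounds --- precisely the McKague et al.\ rate that the paper's Remark 1 is at pains to improve. The paper instead gets $\epsilon_2,\epsilon_3=O(m^{-1})$ because the tests \eqref{e1} demand the averages equal $1$ exactly (every outcome perfectly correlated); conditional on passing such a zero-tolerance test, the residual failure probability is $O(1/(\alpha m))$, not $O(m^{-1/2})$ (Lemma \ref{LL5}). This distinction, together with the fact that the dedicated test \eqref{e4} exists precisely to pin the party-2 anticommutator at $O(m^{-1/2})$ (not merely to fix the relative phase of the entangled state, as you suggest), is the essential new ingredient, and your proposal omits it.

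A secondary issue: your Hoeffding step presupposes that the $8m$ copies are i.i.d., which cannot be assumed for untrusted devices; the paper randomly partitions the copies and uses hypergeometric/Chebyshev interval estimates so that the significance-level claim holds against arbitrarily correlated preparations. The third phase you describe (the SWAP isometry and the promotion of state-dependent closeness to operator-norm closeness via bounds of the type $\|U_1X'_1U_1^\dagger-X_1\|\le 2\sqrt2\,\delta_1'$) is carried out in the paper essentially as you outline and is not where the difficulty lies.
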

Note that the significance level $\alpha$ is the maximum passing probability when one of the conditions in \textbf{(2-5)} does not hold \cite{Lehmann2008}.
The acceptance probability $\beta$ is also called the power of the test in hypothesis testing and is the probability to accept the test in the ideal case.
To satisfy the detectability and the acceptability, $\alpha$ and $\beta$ are chosen to be constants close to $0$ and $1$, respectively, which leads to their trade-off relation.
In this way, we can show how the measurements forming an approximately universal set for MBQC can be certified using a two-qubit state.
Now we proceed to extend this scheme to three-colorable states of arbitrary size.

\section{Self-testing of a three-colorable graph state}
Now, we give a self-testing for a three-colorable graph state $|G'\rangle$, composed of the black part (B), the white part (W), and the red part (R), whose total number of qubits is $n$.
Our protocol needs to prepare $ c m$ samples of the state $|G'\rangle$, where $m$ is $O(n^4 \log n)$, where the constant $c$ depends on the structure of the graph $G$.

To specify it, we introduce three numbers $l_B$, $l_W$, and $l_R$ for a three-colorable graph $G$.
Consider the set $S_B:=\{1, \ldots, n_B\}$ of black sites, the set $S_W:=\{1, \ldots, n_W\}$ of white sites, and the set $S_R:=\{1, \ldots, n_R\}$ of red sites.
We denote the neighborhood of the site $i$ by $N_i \subset S_W\cup S_R$.
We divide the sites $S_B$ into $l_B$ subsets $S_{B,1}, \ldots, S_{B,l_B}$ such that $N_i \cap N_j = \emptyset$ for $i \neq j \in S_{B,k}$ for any $k=1, \ldots, n_B$.
That is, elements of $S_{B,k}$ have no common neighbors, which is called the non-conflict condition.
We choose the number $l_B$ as the minimum number satisfying the non-conflict condition.
We also define the numbers $l_W$ and $l_R$ for the white and red sites in the same way.
In FIG.~\ref{fig:figure2}, we show that for a triangular graph $l_B, l_W, l_R\leq3$.
Based on this structure, testing of measurement devices on each site on $S_{B,k}$ can be reduced to the two-qubit case as follows,
\begin{description}
\item[(3-1)]
Prepare $8 m$ states $|G'\rangle$.

\item[(3-2)]
Measure $Z'$ on all sites of $S_B\setminus S_{B,k}$ for all copies.
Then, apply $Z'$ operators on the remaining sites to correct for the $Z'$ measurement depending on the outcomes.

\item[(3-3)]
For all $i \in S_{B,k}$, choose a site $j_i \in N_i$.
Then, measure $Z'$ on all sites of $S_W\setminus \{j_i\}_{i \in S_{B,k}}$ for all copies.
Apply $Z'$ operators on the remaining sites to correct for the $Z'$ measurements depending on the outcomes.

\item[(3-4)]
For ideal devices, the resultant state should be $\otimes_{i \in S_{B,k}}|\Phi'\rangle_{i j_i}$.
Finally, apply the above self-testing procedure to all of $\{|\Phi'\rangle_{i j_i}\}_{i \in S_{B,k}}$.
\end{description}
Since the above protocol verifies the measurement device on black sites in $S_{B,k}$, we call it B-protocol with $S_{B,k}$.
We define W-protocol with $S_{W,k}$ and R-protocol with $S_{R,k}$ in the same way.

\begin{figure}[t]
	\includegraphics[scale=1.5]{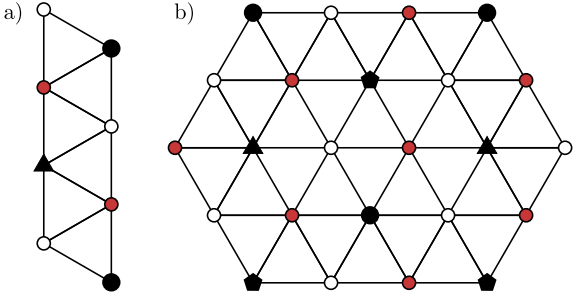}
	\caption{Two examples of three-colorable graphs with their black vertices partitioned into subsets $\{S_{B,k}\}_{k=1}^{l_{B}}$.
	For the small graph in a), we can see that $l_B\leq2$.
	The triangular lattice in b) is still three-colorable but has $l_B\leq3$.
	It can be readily checked that this partitioning satisfies the non-conflict condition since elements of the same partition $S_{B,j}$ do not share any common neighbors.
	White and red vertices may be partitioned in the same way which means that $l_W, l_R\leq3$.}
	\label{fig:figure2}
\end{figure}

Choosing $c_3$ to be $3+8(l_B+l_W+l_R)$, we propose the following self-testing protocol, which is denoted by Test \textbf{(4)}:
\begin{description}
\item[(4-1)]
Prepare $c_3 m+1$ copies of $n$-qubit state $|G'\rangle$.

\item[(4-2)]
Randomly divide the $c_3 m + 1$ copies into $c_3+1$ groups.
The first $c_3$ groups are composed of $m$ copies and the final group is composed of a single copy.

\item[(4-3)]
Measure $Z'$ on the black and white sites and $X'$ on the red sites for the 1st group and check that the outcome of $X'$ measurements
is the same as predicted from the outcomes of $Z'$ measurements.

\item[(4-4)]
Repeat the above stabilizer test on the 2nd group but measure white and red sites in the $Z'$ basis and black sites in $X'$ basis.

\item[(4-5)]
Repeat the above stabilizer test on the 2nd group but measure red and black sites in the $Z'$ basis and white sites in $X'$ basis.

\item[(4-6)]
Run the B-protocol with $S_{B,k}$ for the $4+8(k-1)$-th - $3+8 k$-th groups.
Repeat this protocol for $k=1, \ldots, l_B$.

\item[(4-7)]
Run W-protocol and R-protocol with $S_{W,k}$ and $S_{R,k}$ for the $4+8(l_B+k-1)$-th - $2+8(l_B+k)$-th and the $4+8(l_B+l_W+k-1)$-th - $2+8(l_B+l_W+k)$-th groups as in Step \textbf{(4-6)}, respectively.
Repeat this protocol for $k=1, \ldots, l_W$ and $k=1, \ldots, l_R$, respectively.

\end{description}
Steps \textbf{(4-3)}, \textbf{(4-4)}, and \textbf{(4-5)} perform the stabilizer test given in \cite{Hayashi2015} adapted to a triangular graph state which certifies the graph state $|G\rangle$.
For our self-testing, we need to guarantee local measurements of $X_1$, $Z_1$ and $(X_1\pm Z_1)/\sqrt{2}$ for all sites.
Since Test \textbf{(4)} utilizes B-protocol which in turn uses Test \textbf{(2)}, Test \textbf{(4)} depends on the parameter $c_1$ of Test \textbf{(2)}.

For acceptability, we need to pass Test \textbf{(2)} in all sites, i.e., $n$ qubits.
Hence, as shown in Appendix \ref{app:Th2}, to realize an acceptance probability $\beta$ close to $1$, we need to choose $c_1$ to be $c_4 (\log n)^{1/2}$ with a certain constant $c_4$, which leads to the following theorem.

\begin{theorem}\label{T2}
Given a significance level $\alpha$ and an acceptance probability $\beta$, there exists a pair of positive real numbers $c_2$ and $c_4$ satisfying the following condition.

If the state $|G\rangle$ and our measurements are prepared with no error, 
Test \textbf{(4)} with $c_1=c_4 (\log n)^{1/2}$ is passed with probability $\beta$.
Once Test \textbf{(4)} with $c_1=c_4 (\log n)^{1/2}$ is passed, 
we can guarantee, with significance level $\alpha$, that there exists an isometry $U_i:{\cal H'}_i\to {\cal H}_i$ such that
\begin{align}
\big\| U_i X'_i U_i^\dagger - X_i \big\| ,~
\big\| U_i Z'_i U_i^\dagger - Z_i \big\| &\le \delta \label{H15-1}\\
\big\| U_i A(0)'_i U_i^\dagger - A(0)_i \big\| , ~
\big\| U_i A(1)'_i U_i^\dagger - A(1)_i \big\| &\le \delta \label{H15-2B}\\
\textrm{Tr} \big[\sigma (I-P'_1)\big], \text{Tr} \big[\sigma (I-P'_2)\big], \textrm{Tr} \big[\sigma (I-P'_3)] &\le \frac{\alpha}{m}\label{H15-3}
\end{align}
where $\delta:=c_2 (\frac{\log n}{m})^{1/4}$, $U:=U_1 \otimes \cdots \otimes U_n$, $\sigma$ is the resultant state on the final group, and $P'_1$, $P'_2$, $P_3'$ are POVM elements corresponding to pass of Steps \textbf{(4-3)}, \textbf{(4-4)}, and \textbf{(4-5)}.
\end{theorem}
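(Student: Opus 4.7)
The plan is to reduce the multi-qubit certification to a combination of the stabilizer-based graph state verification of \cite{Hayashi2015} (for Steps~\textbf{(4-3)}--\textbf{(4-5)}) with Theorem~1 applied site-by-site through the B/W/R-protocols. The structural observation enabling the second reduction is that, on an ideal three-colorable graph state $|G\rangle$, measuring $Z$ on every site outside $S_{c,k}\cup\{j_i\}_{i\in S_{c,k}}$ and absorbing the resulting local $Z$ corrections (Steps~\textbf{(3-2)}--\textbf{(3-3)}) deterministically produces the tensor product $\bigotimes_{i\in S_{c,k}}|\Phi\rangle_{ij_i}$. Hence Step~\textbf{(3-4)} is equivalent to running $|S_{c,k}|$ mutually non-interfering instances of Test~\textbf{(2)} in parallel, each of which can be fed directly into Theorem~1 to produce a local isometry $U_i$ on ${\cal H}'_i$.

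For acceptability in the ideal case, I would union-bound the failure probabilities of all $O(n)$ individual Bell-pair sub-tests together with the three stabilizer tests. Hoeffding's inequality applied to the correlation sums in (\ref{e2})--(\ref{e4}) shows that each Bell-pair sub-test fails with probability $O(\exp(-\Omega(c_1^2)))$. Demanding the union bound over all $O(n)$ sub-tests to produce total failure at most $1-\beta$ forces $\exp(-\Omega(c_1^2))\le O((1-\beta)/n)$, which yields the scaling $c_1=c_4(\log n)^{1/2}$ for a suitable $c_4(\alpha,\beta)$.

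For soundness, at each site $i$ Theorem~1 is invoked with threshold $c_1=c_4(\log n)^{1/2}$. Tracking how the precision in Theorem~1 depends on $c_1$ (the deviations in (\ref{e2})--(\ref{e4}) are of order $c_1/\sqrt{m}$, and the robustness bound of its proof propagates this as a square root) yields the per-site bound $\delta=c_2(\log n/m)^{1/4}$, giving the local isometries $U_i$ satisfying (\ref{H15-1})--(\ref{H15-2B}). The stabilizer tests in Steps~\textbf{(4-3)}--\textbf{(4-5)} feed directly into the graph-state certification argument of \cite{Hayashi2015}, which on the single untested final copy yields the POVM bound (\ref{H15-3}) at the claimed rate $\alpha/m$ by a standard Markov-type conversion of the passing frequency into a bound on the remaining state.

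The main obstacle I expect is the global bookkeeping required to extract a single tensor product isometry $U=U_1\otimes\cdots\otimes U_n$ valid simultaneously across all three color protocols, while keeping the significance level and acceptance probability balanced against the triple union bound over the $l_B+l_W+l_R$ sub-protocols and the sites inside them. Concretely, one must argue that the isometry extracted for a black site $i$ from the B-protocol is intrinsic to ${\cal H}'_i$ and independent of the specific neighbor $j_i\in N_i$ paired with it during testing; this follows because the Mayers--Yao style construction underlying Theorem~1 builds $U_i$ solely from the site-$i$ observables $X'_i,Z'_i$, but verifying it carefully is what makes the tensor product meaningful. Once this independence is in hand, closing the constants $c_2$ and $c_4$ to simultaneously meet $(\alpha,\beta)$ reduces to a routine balancing of the union-bound contributions.
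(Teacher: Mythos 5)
Your proposal follows essentially the same route as the paper's Appendix F: acceptability via requiring each of the $O(n)$ per-site sub-tests to pass with probability $1-\frac{1-\beta}{4n}$ (the paper uses a central-limit-theorem tail estimate where you use Hoeffding, but both give $c_1=c_4(\log n)^{1/2}$), soundness by feeding the site-wise correlation bounds $\epsilon_4,\epsilon_5=O((\log n)^{1/2}/\sqrt m)$ into Proposition~\ref{T1} to obtain $\delta=c_2(\log n/m)^{1/4}$, and the stabilizer-test sampling argument of \cite{Hayashi2015} for \eqref{H15-3}. Your extra care about the isometry $U_i$ being intrinsic to ${\cal H}'_i$ is consistent with the paper's explicit construction of $U_1$ from $X_1',Z_1'$ alone in Lemma~\ref{L3}, so the argument is correct and matches the paper's.
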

Here, the conditions \eqref{H15-1}--\eqref{H15-2B} follow from Theorem 1
and the condition \eqref{H15-3} follows from a similar discussion for the stabilizer test given in \cite{Hayashi2015}.

\section{Certification of the computational result}
To guarantee the computational result, we need to guarantee that our computational operation is very close to the true operation based on Theorem \ref{T2}.
When $\{M_i\}_i$ is a POVM realized by an adaptive measurement on each site from $X$, $Z$, $A(0)$, and $A(1)$, as shown in Appendix \ref{app:errorPOVM}, Theorem \ref{T2} guarantees that
\begin{align}
\big\| U M'_i U^\dagger -M_i \big\| \le 8 n \delta ,\label{H15-4}
\end{align}
where $M_i$ is the ideal POVM.
This inequality can be shown by a modification of a virtual unitary protocol composed of a collection of unitaries on each site controlled by another trusted system \cite[Lemma 3.6]{McKague2016}.
Thus, as shown in Appendix \ref{app:errorInitial}, Eqs.~\eqref{H15-3} combined with the relationship between trace distance and fidelity \cite{Hayashi2014} and the above discussion lead to
\begin{align}
\big\| U \sigma U^\dagger -|G\rangle \langle G| \big\|_1
\le {6 n \delta+ \frac{3 \alpha}{m}}.\label{H15-5}
\end{align}
When $M'_j$ is the POVM element of all the outcomes corresponding to the correct computational result, we have
\begin{align}\label{p_incorrect}
& \big|\textrm{Tr} \big(M'_j \sigma - M_j|G\rangle \langle G|\big) \big| \nonumber\\
\le &
\big|\textrm{Tr} \big(U M'_j U^\dagger  - M_j \big) U \sigma U^\dagger \big|
+
\big|\textrm{Tr} M_j \big(U \sigma U^\dagger-|G\rangle \langle G|\big)\big|\nonumber \\
\le &
 {14 n \delta+ \frac{3 \alpha}{m}}.
\end{align}
Thus, choosing $m=O(n^4 \log n)$, we can achieve constant upper bound for the probability of accepting an incorrect output of the quantum computation
with significance level $\alpha$.
Connection between our protocol and interactive proof systems \cite{Babai1985,Goldwasser1989} is made explicit in Appendix \ref{app:int-proof}.

\section{Application to measurement-only blind quantum computation}
The above protocol may be applied to the scenario of measurement-only blind quantum computation \cite{morimae2013blind,Morimae2014verification,Hayashi2015} when the client does not trust the quantum devices performing the measurements.
Measurement-only blind quantum computation is a type of delegated quantum computation where the client with limited quantum power instructs a server to prepare a multipartite entangled state which is then sent to the client who performs single-qubit measurements that drive the computation.
This protocol is blind by construction, meaning the server cannot find out anything about the computation, and can be verified by stabiliser testing when the client trusts the measurement devices \cite{Hayashi2015}.
Measurement-only blind quantum computation was demonstrated experimentally in an optical setup in \cite{Greganti2016}.
Ability to quickly generate and measure quantum states is essential in any verification protocol therefore we believe that this setup shows great promise for implementing the self-guaranteed protocol in the near future.

Now we address the case when the measurement devices are not trusted.
We consider the client (Verifier) interacting with two servers, Prover 1 and Prover 2, where Prover 1 prepares the initial state and Prover 2 is used to measure the qubits and therefore test the state and the operation of the quantum devices.
There is a possibility that the noise in the initial state is correlated to the noise in the measurement devices.
To overcome this problem the Verifier asks Prover 1 to apply a random unitary $U(T):=
\left(
\begin{array}{cc}
\cos \frac{\pi T}{8} & - \sin \frac{\pi T}{8} \\
\sin \frac{\pi T}{8} & \cos \frac{\pi T}{8} 
\end{array}
\right)$ on every qubit, where $T$ is a uniform random number chosen from $\{0,1,\cdots,7\}$.
This technique of discrete twirling was also used to hide information about the initial state in the blind quantum computation protocols of \cite{Broadbent2009,fitzsimons2012unconditionally,fitzsimons2016private}.
We denote the full vector containing information about the applied local rotations by $\overrightarrow{T}$.
We assume that the order of measurements to be applied does not depend on the measurement outcomes in the computation.
This means that the qubits can be always measured in the same order regardless of the computation and only their basis depends on previous measurement outcomes.
The measurement process is composed of $n$ stages.
In the $i$-th stage of measurement, Verifier asks Prover 2 to measure the $i$-th qubit on each copy, taking into account the random unitary $U(T)$.
This discrete twirling removes the correlation among the sites on the color.
Therefore, the measurement devices on all sites and the state can be considered to be independent.

This protocol requires only independence among two parts, the part of generation of quantum states and the measurement devices.
In the language of interactive proof systems this would require Prover 1 and Prover 2 to be independent.
This requirement is usually enforced by considering Provers that are permitted to agree on a prior cheat strategy but are not allowed to communicate once the protocol commences.

The assumption of independence between the preparation stage and measurement stage is quite strong.
However it is necessary since Prover 1 could quite easily encode the information about the local random unitaries which Prover 2 could later use to his advantage.
This also highlights that the scenario considered in this verification scheme is not the usual one of protocols based on interactive proof systems where the Provers are assumed to be non-communicating.
Here Prover 1 and Prover 2 engage in one-way quantum communication necessitating our assumption that they are to a large degree honest.
On the other hand this assumption is natural in the context of verifying quantum technologies where the Provers are not assumed to be malicious and the only deviation from the verification protocol is caused by unexpected noise.
Similar less secure approach has been recently fruitfully used in \cite{jozsa2017} to efficiently verify adaptive Clifford circuits.

It is possible to enhance our protocol to the case where the Provers are considered malicious and are actively conspiring against the Verifier.
Assume that the Provers share a Bell pair for every qubit that Prover 1 is instructed to prepare.
The Verifier then asks Prover 1 to perform a two-qubit Bell measurement on the $i$-th qubit of a prepared graph state and its corresponding Bell-pair qubit, reporting the outcomes to the Verifier.
All the outcomes are denoted by vector $T'$.
The effect of these measurements is to teleport the prepared initial states from Prover 1 to Prover 2 up to a local unitary.
The Verifier then proceeds with self-testing protocol of Test \textbf{(4)} taking into account the local rotations $U(T+T')$.
Note that even if Prover 2 has access to the information about $T$ he cannot use it to cheat the Verifier as the vector $T'$ is also uniformly random and unknown to him.
By teleporting the copies of initial graph state from Prover 1 to Prover 2, the Verifier can check the computation without making any strong assumptions about the independence of the two Provers.
This addition to our protocol introduces only a multiplicative factor and does not affect the scaling of the overhead required by our protocol.

\section{Discussion}
The above analysis has been restricted to the case of three-colorable graph states.
In fact, the non-conflict condition can be relaxed to the case of graph states which are $k$-colorable as follows.
Firstly, we remember that our analysis can be divided into two parts, testing of the measurement basis and testing of the graph state.
The first part can be generalized as follows.
For each color $i=1, \ldots, k$, we divide the set of vertices with color $i$ into subsets $S_{i,1}, \ldots, S_{i,l_i}$ such that there is no common neighborhood for each subset $S_{i,j}$.
In this case, we can generalized the B-protocol as explained in Appendix \ref{app:multicolor}.
Then, applying this generalization to all colors in the protocol, we can extend the first part.
To realize the second part, for each color $i$, we measure non-$i$ color sites with $Z$ basis and check whether the outcome of measurement $X$ on the sites with color $i$ is the same as the predicted one.
We repeat this protocol for all colors.
Due to this construction we obtain the same analysis as three-colorable case when the numbers $l_1, \ldots, l_k$ are bounded.

For a computation on an $n$-qubit graph state the resources needed to achieve a constant upper bound for probability of accepting a wrong outcome scale as $O(n^4\log n)$.
This is the same scaling obtained in \cite{hajduvsek2015device} which raises an interesting open question of minimal overhead required to guarantee an outcome of quantum computation.
We have shown how our self-testing protocol can be applied to measurement-only blind quantum computation in the case when also the measurement devices cannot be fully trusted.

\section*{Acknowledgements}
MHayashi is partially supported by Fund for the Promotion of Joint International Research (Fostering Joint International Research) No. 15KK0007. This research is supported by the National Research Foundation, Prime Minister's Office, Singapore under its Competitive Research Programme (CRP Award No. NRF-CRP14-2014-02).

\appendix

\section{Significance level}\label{app:hhpo}
In statistical hypothesis testing, 
the significance level is a key concept for a given test $T$.
In general, we say that the statement $S$ holds with the significance level $\alpha$
when the probability of making an incorrect decision is less than $\alpha$ under this claim.
More precisely,
the probability of the following event is less than $\alpha$;
we claim the statement $S$ and the statement $S$ is incorrect.

If the statement $S$ is a property of the true initial state and measurement device,
this can be formulated of a simple form as follows.
We say that the true initial state and measurement device satisfy the property $S$ 
with the significance level $\alpha$
(or simply the property $S$ holds with the significance level $\alpha$)
when the test $T$ is passed and the following condition holds.
When the true initial state and measurement device do not satisfy the property $S$,
the probability to pass the test $T$ is smaller than $\alpha$.
Usually, there are many probability distributions of the whole system even when 
the true initial state and measurement device are fixed to not satisfy the property $S$
because we have several possibilities of the true initial state and measurement device.
This formulation is the conventional formulation in statistical hypothesis testing.
However, in the self-testing, we need to treat the case when 
the statement $S$ is not a property of the true initial state and measurement device,
which requires a more complicated formulation.

Assume that a certain property $S'$ implies that
the statement $S$ is correct with probability $\gamma$, which can be regarded as a kind of property of 
an initial state and a measurement device.
Now, we assume that the test $T$ is passed and we claim the statement $S$ as a result.
The case of making an incorrect decision is contained in the union of the following two cases.
One is the case when the property $S'$ does not hold.
The other is the case when the statement $S$ is not correct while  the property $S'$ holds.
When the property $S'$ holds with the significance level $\alpha$,
we can say that the statement $S$ holds with the significance level $\alpha+\gamma$.
This is because the probability of making an incorrect decision is less than 
the sum of the probabilities of the above two cases.

\section{Interval estimation with binomial distribution}\label{app:bi}
We consider how to verify the success probability of a binary system
by using sampling.
Assume that $n$ binary systems $X_1, \ldots, X_{n}$ take values in $\{1,0\}$.
We randomly choose $m$ systems and denote the sum by $X$.
We assume that the variables $X_1, \ldots, X_{n}$ independently obey the same distribution
$P(1)=p$ and $P(1)=1-p$.
We randomly choose one variable $X'$ from $n-m$ remaining systems.
Then, the variable $X'$ obeys the binary distribution with average $p$.
Now, we consider how to make a statement with respect to the average $p$
from the observed value $X$.

In the following, we denote the binomial distribution of $m$ trials with probability $p$
by $B_p$.
Given $p$, we define $x^+(p)$ as $\min\{ x| B_p( X \ge x )  \le \alpha \}$, which is often called the percent point with $\alpha$.
Then, when the observed $X$ satisfies $X \ge x^+(p_0)$,
we can say that the parameter $p$ is larger than $p_0$ with significance level $\alpha$.
When $m$ is sufficiently large,
$x^+(p_0) $ is approximated to $m p_0+ \sqrt{m}\Phi^{-1}(\alpha) \sqrt{p_0(1-p_0)}$.
Similarly, we define $x^-(p)$ as $\max\{ x| B_p( X \le x )  \le \alpha \}$.
Hence, when the observed $X$ satisfies $x^-(p_1) \ge X \ge x^-(p_0)$,
we can say that the parameter $p$ belongs to $(p_0,p_1 )$ with significance level $2\alpha$.

For a constant $a$
and a sufficiently large $m$, 
the value $\sqrt{\left(p_*+\frac{a}{\sqrt{m}}\right)\left(1-\left(p_*+\frac{a}{\sqrt{m}}\right)\right)}$
can be approximated to $\sqrt{p_*(1-p_*)}$.
We choose
$p_0=p_* - \frac{1}{\sqrt{m} }(\Phi^{-1}(\beta)+\Phi^{-1}(\alpha)) \sqrt{p_*(1-p_*)}$
and
$p_1=p_* + \frac{1}{\sqrt{m} }(\Phi^{-1}(\beta)+\Phi^{-1}(\alpha)) \sqrt{p_*(1-p_*)}$.
We define 
our test $T(m,p_*,\beta)$ 
($T^-(m,p_*,\beta)$)
by the condition that
the observed $X$ belongs to 
the interval
$[m p_* - {\sqrt{m} }\Phi^{-1}(\beta) \sqrt{p_*(1-p_*)},
m p_* + {\sqrt{m} }\Phi^{-1}(\beta) \sqrt{p_*(1-p_*)}
]$
(the half interval
$[m p_* + {\sqrt{m} }\Phi^{-1}(\beta) \sqrt{p_*(1-p_*)},+\infty)$).
We have the following two lemmas.

The following lemma guarantees the success probability when the test is passed
even when the system is maliciously prepared but the distribution is independent and identical, 
which relates to the soundness.

\begin{lemma}\Label{L1T}
When the test $T(m,p_*,\beta)$ ($T^-(m,p_*,\beta)$)
 is passed,
we can say that the parameter $p$ belongs to 
the interval
$[p_* - \frac{1}{\sqrt{m} }(\Phi^{-1}(\beta)+\Phi^{-1}(\alpha)) \sqrt{p_*(1-p_*)},
p_* + \frac{1}{\sqrt{m} }(\Phi^{-1}(\beta)+\Phi^{-1}(\alpha)) \sqrt{p_*(1-p_*)}]$ with significance level $\alpha$.
(the half interval
$[p_* - \frac{1}{\sqrt{m} }(\Phi^{-1}(\beta)+\Phi^{-1}(\alpha)) \sqrt{p_*(1-p_*)},+\infty)$ with significance level $\alpha$).
\end{lemma}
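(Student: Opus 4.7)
The plan is to reduce the lemma to the one-sided tail bounds already recorded via the percent points $x^+(p)$ and $x^-(p)$, and then to invoke the stochastic monotonicity of the binomial family $B_p$ in the parameter $p$ in order to extend those bounds to every parameter lying strictly on the wrong side of $p_0$ or $p_1$.

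First I would rewrite the passing event of $T(m,p_*,\beta)$ as the two-sided inclusion $L \le X \le R$ with $L := mp_* - \sqrt{m}\,\Phi^{-1}(\beta)\sqrt{p_*(1-p_*)}$ and $R := mp_* + \sqrt{m}\,\Phi^{-1}(\beta)\sqrt{p_*(1-p_*)}$. Using the approximation $x^+(p) \approx mp + \sqrt{m}\,\Phi^{-1}(\alpha)\sqrt{p(1-p)}$, together with the Lipschitz expansion $\sqrt{p_0(1-p_0)} = \sqrt{p_*(1-p_*)} + O(m^{-1/2})$ made possible by $p_0 - p_* = O(m^{-1/2})$, a direct substitution gives $x^+(p_0) = L + O(1)$ and, symmetrically, $x^-(p_1) = R + O(1)$. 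Thus, within the same approximation that Appendix \ref{app:bi} already commits to, passing the test forces simultaneously $X \ge x^+(p_0)$ and $X \le x^-(p_1)$.

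Then I would argue as follows. If the true parameter satisfies $p < p_0$, stochastic monotonicity of $B_p$ yields $B_p(X \ge x^+(p_0)) \le B_{p_0}(X \ge x^+(p_0)) \le \alpha$, the last inequality being the very definition of $x^+(p_0)$. Symmetrically, if $p > p_1$, then $B_p(X \le x^-(p_1)) \le B_{p_1}(X \le x^-(p_1)) \le \alpha$. Since no single value of $p$ can fall in both regions simultaneously, the probability of passing the test whenever $p \notin [p_0,p_1]$ is bounded by $\alpha$ in either case, which is exactly the condition for significance level $\alpha$ recalled in Appendix \ref{app:hhpo}. For the one-sided variant $T^-(m,p_*,\beta)$ one simply keeps only the inequality $X \ge L$, and the identical $x^+$-argument delivers the half-interval claim with the same significance level.

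The main technical subtlety I expect is that the test is specified in terms of the normal approximation to $x^\pm(p)$ rather than the exact binomial percent points, so the inclusions $\{X \ge L\} \subseteq \{X \ge x^+(p_0)\}$ and $\{X \le R\} \subseteq \{X \le x^-(p_1)\}$ hold only up to an $O(m^{-1/2})$ correction coming from both the central limit theorem and the Taylor expansion of $\sqrt{p(1-p)}$ around $p_*$. Verifying that these corrections can be absorbed uniformly when $p$ varies in a small neighborhood of $p_*$, which is the only regime in which this lemma will be invoked by Test \textbf{(2)} and Test \textbf{(4)}, is the single piece of housekeeping required to make the argument fully rigorous.
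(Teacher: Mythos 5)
Your proof is correct and follows essentially the route the paper intends: Lemma~1 is stated there without an explicit proof, as a direct consequence of the preceding discussion of the percent points $x^{\pm}(p)$, their normal approximations, and the choice of $p_0,p_1$, and your argument is exactly that reduction, with the stochastic monotonicity of $B_p$ (needed to pass from the boundary parameter $p_0$ to all $p<p_0$) and the $O(1)$ slack between the test thresholds and $x^+(p_0),x^-(p_1)$ made explicit. Your remark that a fixed true $p$ cannot lie on both sides of the interval at once, so the level is $\alpha$ rather than the $2\alpha$ appearing in the paper's preliminary two-sided discussion, correctly matches the lemma's claim.
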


The following lemma guarantees that 
the test will be passed with high probability
when the system is well prepared to be 
the independent and identical distribution with success probability $p_*$, 
which relates to the completeness.

\begin{lemma}
Further, when the true parameter $p$ is $p_*$,
the test $T(m,p_*,\beta)$ ($T^-(m,p_*,\beta)$) is passed with probability 
$1-2\beta$ ($1-\beta$).
\end{lemma}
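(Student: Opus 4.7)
The plan is to invoke the central limit theorem (CLT) in exactly the same asymptotic regime that was used to write the thresholds of $T(m,p_*,\beta)$ and $T^-(m,p_*,\beta)$ in terms of $\Phi^{-1}$. Under the hypothesis $p=p_*$, the statistic $X$ is a sum of $m$ i.i.d.\ Bernoulli($p_*$) variables with mean $mp_*$ and variance $mp_*(1-p_*)$, so the standardized quantity $Z_m:=(X-mp_*)/\sqrt{mp_*(1-p_*)}$ converges in distribution to the standard normal $Z\sim N(0,1)$. This is the only probabilistic input.

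Before computing the probabilities I would first nail down the convention that the paper uses for $\Phi^{-1}$. From the stated approximation $x^+(p_0)\approx mp_0+\sqrt{m}\,\Phi^{-1}(\alpha)\sqrt{p_0(1-p_0)}$ together with its defining property $B_{p_0}(X\ge x^+(p_0))\le\alpha$, one reads off that $\Phi^{-1}(u)$ denotes the upper $u$-quantile of $N(0,1)$, i.e.\ $P(Z\ge\Phi^{-1}(u))=u$. With this in hand the rest is algebra: the two-sided pass condition of $T(m,p_*,\beta)$, namely $|X-mp_*|\le\sqrt{m}\,\Phi^{-1}(\beta)\sqrt{p_*(1-p_*)}$, rewrites as $|Z_m|\le\Phi^{-1}(\beta)$, whose limiting probability is
\begin{equation*}
1-P\!\left(Z>\Phi^{-1}(\beta)\right)-P\!\left(Z<-\Phi^{-1}(\beta)\right)=1-\beta-\beta=1-2\beta
\end{equation*}
by symmetry of $Z$. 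For the one-sided test $T^-(m,p_*,\beta)$, consistency with the conclusion of Lemma~\ref{L1T} (passing yields a one-sided lower bound on $p$) fixes the pass event as the half line that corresponds to $Z_m\ge -\Phi^{-1}(\beta)$, whose normal-limit probability is $1-\beta$ by the same convention.

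The only real obstacle is that the CLT gives convergence rather than equality, so the probabilities $1-2\beta$ and $1-\beta$ are attained only in the large-$m$ asymptotic sense; a fully non-asymptotic version would need a Berry-Esseen bound, producing corrections of order $1/\sqrt{m}$. This is harmless here because the very definitions of $T$ and $T^-$ via the quantities $\sqrt{m}\,\Phi^{-1}(\cdot)\sqrt{p_*(1-p_*)}$ were already made inside the same normal-approximation regime, so no additional assumption beyond the one already implicit in Lemma~\ref{L1T} and in the preceding discussion of the percent points $x^\pm(p)$ is needed.
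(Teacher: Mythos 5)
Your proposal is correct and is exactly the argument the paper leaves implicit: the lemma is stated without proof, being the standard consequence of the normal approximation to the binomial already invoked in defining the percent points $x^\pm(p)$ and the thresholds $\sqrt{m}\,\Phi^{-1}(\cdot)\sqrt{p_*(1-p_*)}$, so the CLT computation giving $1-2\beta$ for the two-sided interval and $1-\beta$ for the half line is all that is needed. You also correctly identified that the printed half interval $[m p_* + \sqrt{m}\,\Phi^{-1}(\beta)\sqrt{p_*(1-p_*)},+\infty)$ must carry a minus sign for consistency with Lemma~\ref{L1T} and with the claimed probability $1-\beta$, and your caveat that the equalities hold only in the large-$m$ normal-approximation regime matches the paper's own usage.
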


\section{Interval estimation with hyper-geometric distribution}\label{app:hyper}
In general, the variables $X_1, \ldots, X_{n}$ 
are not necessarily independent and identical.
Now, we consider such a general case.
Since $X$ is given as the sum of $m$ random samples,
the distribution of $X$ is given as 
$ \sum_{k=0}^{n} Q_K(k) P_{HG|n,m,k} $
with the distribution $Q_K$ of the hidden variable $K$ on $\{0, 1, \ldots, 2m+1\}$,
where the hypergeometric distribution $P_{HG|n,m,k}$ is given as
\begin{align}
P_{HG|n,m,k}(x):=\frac{{m \choose x} {n-m \choose k-x}
}{ {n \choose k}},
\end{align}
which has been employed in the security analysis on the quantum key distribution, for example \cite{shor2000simple,hayashi2006practical,hayashi2012concise}. 
Since $X'$ is a random choice from $n-m$ remaining systems,
when $K=k$ and $X=x$,
the variable $X'$ obeys the distribution $P_{HG|n-m,1,k-x}$, which equals the binary distribution
with average $\frac{k-x}{n-m} $.
In general,
when $X=x$,
the success probability of the binary distribution of $X'$ 
is $\sum_{k}P_{K|X}(k,x)\frac{k-x}{n-m} $.
Define the positive value
$c(\alpha,\beta)$ as
\begin{align}
c(\alpha,\beta)^2:=
\frac{\left(\frac{1}{2}+\frac{n}{n-m} \Phi^{-1}(\beta)\sqrt{p_*(1-p_*)}\right)^2}{\alpha}.
\end{align}
Now, we consider how to make a statement with respect to 
the success probability of 
the binary distribution of $X'$ from the observed value $X$.
The following lemma guarantees the success probability when the test is passed
even when the system is maliciously and the distribution is not independent nor identical, 
which relates to the soundness in the general case.

\begin{lemma}\Label{LL4}
Assume that $n= \kappa m+o(m)$.
When the test $T(m,p_*,\beta)$ is passed,
we can say that 
the success probability of the binary distribution of $X'$ 
belongs to the interval
$[p_* - \frac{c(\alpha,\beta)}{\sqrt{m} },
p_* + \frac{c(\alpha,\beta)}{\sqrt{m} }]$ 
with significance level $\alpha$.
\end{lemma}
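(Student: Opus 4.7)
The plan is to bound $\Pr[\mathcal{P} \cap |q(X) - p_*| > c(\alpha,\beta)/\sqrt{m}] \leq \alpha$, where $\mathcal{P}$ is the test-passing event and $q(X):=\sum_k P_{K|X}(k,X)(k-X)/(n-m)$, by Markov's inequality applied to $(q - p_*)^2 \mathbb{1}_{\mathcal{P}}$. I would start from the algebraic identity obtained by writing $\hat{p} := X/m$, $\hat{p}' := (K-X)/(n-m)$, and $\Delta := \hat{p}' - \hat{p}$. Since $X' \mid (K,X) \sim \mathrm{Bernoulli}(\hat{p}')$, taking conditional expectations gives $q(X) = \hat{p} + \mathbb{E}[\Delta \mid X]$, so
\begin{equation*}
q(X) - p_* = \mathbb{E}[\Delta \mid X] + (\hat{p} - p_*).
\end{equation*}
On $\mathcal{P}$, the second term is deterministically bounded by $\Phi^{-1}(\beta)\sqrt{p_*(1-p_*)}/\sqrt{m}$, which is exactly the test's definition.

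For the first term, I would condition on $K$: using $\mathbb{E}[\Delta \mid K] = 0$ (since $\mathbb{E}[X\mid K=k]=mk/n$) and the hypergeometric variance $\mathrm{Var}(X \mid K = k) = m p_k(1 - p_k)(n - m)/(n - 1)$ with $p_k := k/n$, a direct calculation yields $\mathrm{Var}(\Delta) = n^2\,\mathbb{E}[p_K(1-p_K)]/[m(n-m)(n-1)] \leq n^2/[4m(n-m)(n-1)]$, and Jensen's inequality gives $\mathbb{E}[(\mathbb{E}[\Delta \mid X])^2] \leq \mathrm{Var}(\Delta)$. Applying Minkowski's inequality in the $\mathbb{1}_{\mathcal{P}}$-weighted $L^2$ norm then yields
\begin{equation*}
\sqrt{m\, \mathbb{E}[(q - p_*)^2 \mathbb{1}_{\mathcal{P}}]} \leq \sqrt{m\, \mathrm{Var}(\Delta)} + \Phi^{-1}(\beta)\sqrt{p_*(1-p_*)},
\end{equation*}
after which Markov's inequality delivers $\Pr[\mathcal{P} \cap (q - p_*)^2 > c^2/m] \leq m\,\mathbb{E}[(q - p_*)^2 \mathbb{1}_{\mathcal{P}}]/c^2$. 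Setting the right-hand side equal to $\alpha$ and solving for the threshold produces the claimed $c(\alpha,\beta)$.

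The main obstacle will be matching the precise constants in $c(\alpha,\beta)^2 = (\tfrac{1}{2} + \tfrac{n}{n-m}\Phi^{-1}(\beta)\sqrt{p_*(1-p_*)})^2/\alpha$: the constant $\tfrac{1}{2}$ should emerge from $p_K(1 - p_K) \leq \tfrac{1}{4}$ applied inside the $\sqrt{m\,\mathrm{Var}(\Delta)}$ estimate, while the factor $\tfrac{n}{n-m}$ on the $\Phi^{-1}(\beta)$ term should reflect the hypergeometric extrapolation from the $m$ observed to the $n-m$ unobserved samples. A natural way to reach the stated form is to use the alternative decomposition $q - p_* = \tfrac{n}{n-m}(\mathbb{E}[K/n \mid X] - p_*) - \tfrac{m}{n-m}(\hat{p} - p_*)$, expand $K/n - p_* = \tfrac{n-m}{n}\Delta + (\hat{p} - p_*)$ inside the first piece so that the $\tfrac{n-m}{n}$ cancels the $\tfrac{n}{n-m}$ prefactor on the $\Delta$-contribution, and loosen $\tfrac{m}{n-m}$ to $\tfrac{n}{n-m}$ on the deterministic piece; the asymptotic assumption $n = \kappa m + o(m)$ is then invoked to absorb lower-order corrections such as $(n-1) \approx n$ and $(n-m)/(n-1) \leq 1$ into the clean constants displayed in the lemma.
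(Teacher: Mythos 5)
Your proposal is correct in substance and shares the outer skeleton of the paper's proof --- bound the second moment of $p(X)-p_*$ restricted to the passing event, then apply Markov/Chebyshev --- but the key estimate is organized genuinely differently. The paper conditions on $K=k$, combines a pointwise bound $|p_*-\frac{k-X}{n-m}|\le \frac{1}{\sqrt m}(c_2+\frac{m}{n-m}c_1)$ valid on the passing event (with $c_2:=\sqrt m\,|k/n-p_*|$) with a Chebyshev bound $\min(1,\frac{1}{4(c_2-c_1)^2})$ on the conditional probability of passing, and then maximizes the product over $c_2$; that maximization is exactly where the constant $(\frac12+\frac{n}{n-m}c_1)^2$ comes from. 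You instead compute the unconditional variance of the extrapolation error $\Delta$ and invoke Jensen plus Minkowski; this is cleaner, avoids the case split on $c_2\gtrless c_1$, and your variance formula together with $E[p_K(1-p_K)]\le\frac14$ is correct and distribution-free in $Q_K$, as required. The price is the constant: your route gives $\sqrt{m\,\mathrm{Var}(\Delta)}\le\frac12\sqrt{\frac{n^2}{(n-m)(n-1)}}$ plus a coefficient $1$ on $c_1$, whereas the paper gets $\frac12$ plus $\frac{n}{n-m}c_1$; at $m=n/2$ these are roughly $\frac{1}{\sqrt2}+c_1$ versus $\frac12+2c_1$, so your bound is smaller only when $c_1\gtrsim 0.21$ and does not literally reproduce the stated $c(\alpha,\beta)$ for all parameter values. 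Moreover, the ``alternative decomposition'' you sketch to repair this collapses algebraically back to $q-p_*=E[\Delta\mid X]+(\hat p-p_*)$, since $\frac{n}{n-m}-\frac{m}{n-m}=1$, so it cannot manufacture the paper's constant. This is a cosmetic rather than structural defect: your argument proves the lemma with a constant of the same form and order, which is all that is used downstream, but if you want the exact displayed $c(\alpha,\beta)$ you need the paper's conditional worst-case-over-$k$ argument rather than the global variance bound.
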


We fix $\epsilon>0$
and 
choose 
$c:=\frac{\left(\frac{1}{2}+2 \Phi^{-1}(\beta)\sqrt{p_*(1-p_*)}\right)^2}{\alpha}$.
Due to Lemma \ref{LL4},
to guarantee 
that the success probability of a binary system belongs to the interval
$[p_* - \epsilon, p_* + \epsilon]$ 
with significance level $\alpha$,
we need to prepare $2\frac{c^2}{ \epsilon^2}$  samples 
and observe $\frac{c^2}{ \epsilon^2}$  samples.

Next, we define the test $T(m,0)$ by the condition that the observed $X$ equals $0$.
We have the following lemma,
\begin{lemma}\Label{LL5}
When the test $T(m,0)$ is passed,
we can say that 
the success probability of the binary distribution of $X'$ 
is less than $ \frac{1-\alpha}{m \alpha}$
with significance level $\alpha$.
\end{lemma}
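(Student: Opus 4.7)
My plan is to recast the statement as: whenever the true distribution makes the conditional success probability $p := P(X'=1 \mid X=0)$ at least $(1-\alpha)/(m\alpha)$, the probability $q := P(X=0)$ of passing test $T(m,0)$ is at most $\alpha$. This is exactly the significance-level condition formulated in Appendix \ref{app:hhpo}, since the success probability of the binary distribution of $X'$ conditioned on the observed $X=0$ was identified in the paragraph preceding Lemma \ref{LL4} with $\sum_{k}P_{K\mid X}(k,0)\,k/(n-m) = P(X'=1\mid X=0)$.

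The key ingredient is the exchangeability of the sampled indices. Let $Z_j := X_{\pi(j)}$ for the uniform random permutation $\pi$ of $\{1,\ldots,n\}$ implicit in the sampling scheme, so that $X = Z_1 + \cdots + Z_m$ and $X' = Z_{m+1}$. Then the tuple $(Z_1,\ldots,Z_{m+1})$ is exchangeable regardless of the joint law of $(X_1,\ldots,X_n)$. For each $j \in \{1,\ldots,m+1\}$, define the event
\[
A_j := \{Z_j = 1\} \;\cap\; \bigcap_{\substack{i=1 \\ i \ne j}}^{m+1} \{Z_i = 0\}.
\]
These events are pairwise disjoint and, by exchangeability, equiprobable; in particular, $P(A_j) = P(A_{m+1}) = P(X=0,\,X'=1) = qp$ for every $j$.

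For $j\le m$, the event $A_j$ forces $Z_j = 1$ and hence $X \ge 1$, so $A_1\cup\cdots\cup A_m \subset \{X\ge 1\}$. Combined with disjointness,
\[
m\,qp \;=\; \sum_{j=1}^m P(A_j) \;\le\; P(X\ge 1) \;=\; 1-q,
\]
which rearranges to $q(1+mp) \le 1$. Substituting the threshold $p = (1-\alpha)/(m\alpha)$ gives $1+mp = 1+(1-\alpha)/\alpha = 1/\alpha$, so that $q \le \alpha$ whenever $p \ge (1-\alpha)/(m\alpha)$. This is the required significance bound.

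The only subtle point I would flag is that the exchangeability is a consequence of the uniform random sampling rule alone and does \emph{not} require any assumption on the joint distribution of the underlying variables, so the argument remains valid in the general dependent/non-identical regime of Appendix \ref{app:hyper}, where the hidden sum $K$ may follow an arbitrary distribution $Q_K$. No other step poses a real obstacle.
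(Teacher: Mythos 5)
Your proof is correct, and it takes a genuinely different route from the paper's. The paper first reduces the general case to $n=m+1$ by arguing that the joint behavior of the $m$ observed variables and $X'$ can be written as a mixture of hypergeometric distributions on $m+1$ elements; it then observes that only $K\in\{0,1\}$ can pass $T(m,0)$, computes the pass probability $1-p_K+\frac{p_K}{1+m}$ and the conditional success probability $\frac{p_K/(1+m)}{1-p_K+p_K/(1+m)}$ explicitly in terms of the mixture weight $p_K=Q_K(1)$, and derives the bound from the assumption that the pass probability exceeds $\alpha$. You instead work directly with the exchangeable tuple $(Z_1,\dots,Z_{m+1})$ of sampled values, and the disjoint-union inequality $m\,qp\le 1-q$ gives $q\le \frac{1}{1+mp}$ with $p$ being the conditional success probability itself. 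The two arguments yield exactly the same quantitative relation ($q>\alpha$ forces $p<\frac{1-\alpha}{m\alpha}$), but yours buys two things: it avoids the reduction step of the paper's Step 2, which is asserted rather tersely and itself needs justification, and it never passes through the hypergeometric parametrization, so the validity under arbitrary joint laws of $(X_1,\dots,X_n)$ is manifest from the sampling rule alone. The paper's version has the advantage of staying within the hypergeometric framework already set up for Lemma \ref{LL4}. One cosmetic point: at the boundary $p=\frac{1-\alpha}{m\alpha}$ you obtain $q\le\alpha$ rather than $q<\alpha$, but this edge case is present in the paper's own treatment as well and does not affect the conclusion.
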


We fix $\epsilon>0$.
Due to Lemma \ref{LL5},
to guarantee 
that the success probability of a binary system is less than
$\epsilon$ 
with significance level $\alpha$,
we need to prepare and observe $\frac{1-\alpha}{ \alpha\epsilon}$  samples.

Now, we discuss what kind of test will be used in this paper.
In the definition of $c(\alpha,\beta)$, the term $\frac{n}{n-m}$ appears.
If $n-m$ does not increase with the order $O(n)$, this term goes to infinity.
For example, when $m$ is a half of $n$, this term is $2$, which yields a useful application of Lemma \ref{LL4}.
Hence, when we need to verify that the binary variable has the success probability close to a certain non-zero value $p_*$, 
we use the test $T(m,p_*,\beta)$ with a half of observed values, i.e., $m=n/2$.
In contrast, when we need to verify that the binary variable has the success probability close to zero, 
we use all of observed values and employ the test $T(m,0)$.

\section{Proofs of Lemmas \ref{LL4} and \ref{LL5}}
\begin{proofof}{Lemma \ref{LL4}}
Step 1):
In this proof, 
the distribution $Q_K$ does not necessarily have a positive probability at one point.
To treat this case, we address the joint distribution $P_{XK}$ of $X$ and $K$
and the conditional distribution $P_{K|X}$.
When we observe $x$ as the outcome of the random variable $X$,
the success probability of the binary distribution of $X'$ is 
$p(x):= \sum_{k} P_{K|X}(k|x)\frac{k-x}{n-m}$.
Using $c_1:= \Phi^{-1}(\beta)\sqrt{p_*(1-p_*)}$
We define the function
\begin{align}
f(x):=
\left\{
\begin{array}{ll}
\left| p_*-p(x)\right| & \hbox{when } \left| p_*-\frac{x}{m}\right| \le \frac{c_1}{\sqrt{m}} \\
0 & \hbox{otherwise}.
\end{array}
\right.
\end{align}
The probability of making an incorrect decision is 
the probability of the event that
$f(X) > c(\alpha,\beta)$.
Therefore, it is sufficient to show that
this probability is less than $\alpha$.
As shown later, we have
\begin{align}
E_X f(X)^2
\le
\frac{1}{m}\left(\frac{1}{2}+\frac{n}{n-m} c_1\right)^2.\Label{HHT}
\end{align}
Then, Chebyshev inequality guarantees 
\begin{align}
P_X 
\left\{f(X) \ge \frac{c(\alpha,\beta)}{\sqrt{m} } \right\}
\le \alpha,
\end{align}
which is equivalent to the desired statement.

Step 2):
In the following, we show \eqref{HHT}.
First, we notice that
\begin{align}
E_X f(X)^2 & = \tilde{E}_X | p_*-p(X)|^2 \nonumber\\
& = \tilde{E}_X \left( E_{K|X} \left(p_*-\frac{K-X}{n-m}\right)\right)^2 \nonumber\\
& \le \tilde{E}_X \left( E_{K|X} \left(p_*-\frac{K-X}{n-m}\right)^2\right) \nonumber \\
& =  E_{K} \left( \tilde{E}_{X|K}\left(p_*-\frac{K-X}{n-m}\right)^2\right),\Label{TR1}
\end{align}
where $\tilde{E}_X Y$ expresses the expectation 
of $ 1_{[p_*-\frac{c_1}{\sqrt{m}}, p_*+\frac{c_1}{\sqrt{m}}  ]}(X) Y $
with respect to $X$
and $1_{[p_*-\frac{c_1}{\sqrt{m}}, p_*+\frac{c_1}{\sqrt{m}}  ]}$
is the indicator function on the set $[p_*-\frac{c_1}{\sqrt{m}}, p_*+\frac{c_1}{\sqrt{m}}  ]$.

For a given $k$, we set $c_2$ as
$ \left|\frac{k}{n}-p_*\right|=\frac{c_2}{\sqrt{m}}$.
When $\left| p_*-\frac{X}{m}\right| \le \frac{c_1}{\sqrt{m}}$, we have
\begin{align}
\frac{c_2-c_1}{\sqrt{m}}
\le
\left|\frac{X}{m}-\frac{k}{n}\right|
\le \frac{c_2+c_1}{\sqrt{m}}.\Label{TR4}
\end{align}
Then, we have
\begin{align}
\left|p_*-\frac{k-X}{n-m}\right| & \le \left|p_*-\frac{k}{n}\right|+\left|\frac{k}{n}-\frac{k-X}{n-m}\right| \nonumber\\
& = \left|p_*-\frac{k}{n}\right|+\frac{m}{n-m}\left|\frac{X}{m}-\frac{k}{n}\right| \nonumber \\
& \le \frac{c_2}{\sqrt{m}}+\frac{m}{n-m} \frac{c_1}{\sqrt{m}} \nonumber\\
& = \frac{1}{\sqrt{m}}\left(c_2+\frac{m}{n-m} c_1\right).\Label{TR2}
\end{align}
When $c_2> c_1$, using \eqref{TR4}, Chebyshev inequality guarantees 
\begin{align}
& P_{X|K=k} \left\{ \left| p_*-\frac{X}{m}\right| \le \frac{c_1}{\sqrt{m}}  \right\} \nonumber\\
& \le P_{X|K=k} \left\{ \frac{c_2-c_1}{\sqrt{m}} \le \left|\frac{X}{m}-\frac{k}{n}\right| \right\} \nonumber\\
& \le \frac{V_k}{m^2  \left(\frac{c_2-c_1}{\sqrt{m}}\right)^2} \nonumber\\
& \le \frac{1}{4(c_2-c_1)^2},
\Label{TR3}
\end{align}
where $V_k$ is the variance of $X$ and is calculated to be
$\frac{(n-m)m (n-k)k}{(n-1)n^2} \le \frac{m}{4}$.

The combination of \eqref{TR2} and \eqref{TR3} yields
\begin{align}
& \tilde{E}_{X|K=k}\left(p_*-\frac{k-X}{n-m}\right)^2 \nonumber\\
& \le \min \left(\frac{1}{4(c_2-c_1)^2}, 1\right)
\frac{1}{m}\left(c_2+\frac{m}{n-m} c_1\right)^2.
\end{align}
Since 
\begin{align}
& \max_{c_2: c_2\ge c_1}
\min \left(\frac{1}{4(c_2-c_1)^2}, 1\right)
\frac{1}{m}\left(c_2+\frac{m}{n-m} c_1\right)^2 \nonumber\\
& = \frac{1}{m}\left(\frac{1}{2}+\frac{n}{n-m} c_1\right)^2,
\end{align}
we have
\begin{align}
\tilde{E}_{X|K=k}
\left(p_*-\frac{k-X}{n-m}\right)^2
\le
\frac{1}{m}\left(\frac{1}{2}+\frac{n}{n-m} c_1\right)^2.
\Label{EET}
\end{align}
When $c_2 \le c_1$, we have
\begin{align}
\left|p_*-\frac{k-X}{n-m}\right|^2 & \le \frac{1}{m}\left(c_2+\frac{m}{n-m} c_1\right)^2 \nonumber\\
& \le \frac{1}{m}\left(\frac{1}{2}+\frac{n}{n-m} c_1\right)^2,
\end{align}
which implies \eqref{EET}.
The combination of \eqref{TR1} and \eqref{EET} implies 
\eqref{HHT}.
\end{proofof}

\begin{proofof}{Lemma \ref{LL5}}
Step 1):
We consider the case $n=m+1$.
While the true distribution is given as a probabilistic mixture of hypergeometric distributions,
it is sufficient to consider the mixture of the cases of $K=0,1$ because 
there is no possibility to pass the test when $K>1$.
Assume that $Q_K(1)=p$ and $Q_K(0)=1-p$.
The probability to pass is $\frac{1}{1+m}$ for $K=1$ and $1$ for $K=0$.
Hence, in general, the probability to pass is $1-p+\frac{p}{1+m}$.

When the test $T(m,0)$ is passed,
the success probability of the binary distribution of $X'$ 
is $ \frac{\frac{p}{1+m}}{1-p+\frac{p}{1+m}}$.
Therefore, when the probability to pass is greater than $\alpha$, i.e.,
$p \le \frac{(1+m)(1-\alpha)}{m}$
the success probability of the binary distribution of $X'$ 
is less than
$\frac{\frac{1}{1+m} \frac{(1+m)(1-\alpha)}{m}}{\alpha} 
=\frac{1-\alpha}{m\alpha}$.

Step 2):
We consider the general case. i.e., $n > m+1$.
Even in this case, if we focus on the observed variables $X_1, \ldots, X_m$
and the variable $X'$,
the behavior of $X$ can be written by a mixture of
hypergeometric distributions with $n=m+1$.
Hence, we obtain the desired statement.
\end{proofof}

\section{Bell state and Proof of Theorem 1}\label{app:Bell}
\subsection{Proof of Theorem 1 using a proposition}
In this appendix, we show Theorem 1 of the main body.
For this purpose, we focus on the observables
\begin{align}
\sX:=|1 \rangle \langle 0| +|0 \rangle \langle 1| ,
\quad
\sZ:=
|0 \rangle \langle 0| -|1 \rangle \langle 1| .
\end{align}
We consider the state  $(|0,+ \rangle +|1,- \rangle )/\sqrt{2}$ on the composite system $\cH_1 \otimes \cH_2$.
We also define
\begin{align}
\sA(0):= \frac{\sX+\sZ}{\sqrt{2}}, \quad \sA(1):= \frac{\sX-\sZ}{\sqrt{2}}. 
\end{align}

Here, instead of the ideal systems $\cH_1$ and $\cH_2$, we have the real systems $\cH_1'$ and $\cH_2'$.
Also, we assume that we can measure real observables
$\sX_1'$, $\sX_2'$,
$\sZ_1'$, $\sZ_2'$,
$\sA(0)_1'$, and
$\sA(1)_1'$.
Here, we choose the real systems $\cH_1'$ and $\cH_2'$ sufficiently large so that our measurements are the projective decompositions of these observables.

In the following, we prepare the real state $|\psi'\rangle $ on the composite system $\cH_1' \otimes \cH_2'$.
Then, we have the following proposition.
\begin{proposition}\label{T1}
When
\begin{align}
& \langle \psi'| \sA(0)_1'\left(\sX_2' +\sZ_2'\right)-\sA(1)_1'\left(\sX_2' - \sZ_2'\right) |\psi'\rangle \ge 2\sqrt{2} -\epsilon_1 \label{2-13-1} \\
& \langle \psi'| \sX_1' \sZ_2'  |\psi'\rangle \ge 1 -\epsilon_2 \label{ep4} \\
& \langle \psi'| \sZ_1' \sX_2'  |\psi'\rangle \ge 1 -\epsilon_3 \label{ep5} \\
& \langle \psi'| \sA(0)_1' (\sZ_2'+\sX_2')  |\psi'\rangle \ge \sqrt{2} -\epsilon_4 ,\label{ep6} \\
& \langle \psi'| \sA(1)_1' (\sZ_2'-\sX_2')  |\psi'\rangle \ge \sqrt{2} -\epsilon_4 ,\label{ep7} \\
& |\langle \psi'| \sX_1' \sX_2'+\sX_1' \sZ_2'  |\psi'\rangle| \le \epsilon_5 ,\label{2-13-2}
\end{align}
there exists a local isometry $U:\cH_1' \to \cH_1$ such that
\begin{align}
\left\|  U \sX_1' U^\dagger-\sX_1 \right\| & \le \delta_1 \label{2-13-3}\\
\left\|  U \sZ_1' U^\dagger-\sZ_1 \right\| & \le \delta_1 \\
\left\|  U \sA(0)_1' U^\dagger-\frac{\sX_1+\sZ_1}{\sqrt{2}} \right\| & \le \delta_2 \\
\left\|  U \sA(1)_1' U^\dagger-\frac{\sX_1-\sZ_1}{\sqrt{2}} \right\| & \le \delta_2 ,
\label{2-13-4}
\end{align}
where
$\delta_1:=
\sum_{j=1}^3
\hat{c}_j \epsilon_j^{\frac{1}{2}}$
and
$\delta_2:=
\sum_{j=1}^5
\bar{c}_j \epsilon_j^{\frac{1}{2}}
+ \sqrt{2}(\epsilon_2^{\frac{1}{4}}+ \epsilon_3^{\frac{1}{4}})$,
and $\hat{c}_j$ and $\bar{c}_j$ are constants.
\end{proposition}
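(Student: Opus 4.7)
The plan is to follow the standard self-testing template: extract state-level operator identities from the five hypotheses via a sum-of-squares decomposition of the CHSH-type Bell operator, then use a swap isometry built from $\sX_1'$ and $\sZ_1'$ to transfer the approximations into operator-norm bounds of the claimed form.

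First, I would apply the standard SOS identity
\begin{equation*}
2\sqrt{2}\,I - \mathcal{B} \;=\; \tfrac{1}{\sqrt{2}}\bigl(\sA(0)_1' - \tfrac{\sX_2'+\sZ_2'}{\sqrt{2}}\bigr)^{2} + \tfrac{1}{\sqrt{2}}\bigl(\sA(1)_1' - \tfrac{\sZ_2'-\sX_2'}{\sqrt{2}}\bigr)^{2},
\end{equation*}
where $\mathcal{B}$ denotes the Bell operator of \eqref{2-13-1}. Evaluating in $|\psi'\rangle$ and using the hypothesis yields that both $\bigl\|\bigl(\sA(0)_1' - (\sX_2'+\sZ_2')/\sqrt{2}\bigr)|\psi'\rangle\bigr\|$ and $\bigl\|\bigl(\sA(1)_1' - (\sZ_2'-\sX_2')/\sqrt{2}\bigr)|\psi'\rangle\bigr\|$ are $O(\sqrt{\epsilon_1})$. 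Separately, the elementary identity $\|(V-I)|\psi'\rangle\|^{2}=2(1-\mathrm{Re}\langle V\rangle)$ for a Hermitian unitary $V$, applied via \eqref{ep4} to $V=\sX_1'\sZ_2'$ and via \eqref{ep5} to $V=\sZ_1'\sX_2'$, yields the Mayers--Yao identifications $\sX_1'|\psi'\rangle \approx \sZ_2'|\psi'\rangle$ within $\sqrt{2\epsilon_2}$ and $\sZ_1'|\psi'\rangle \approx \sX_2'|\psi'\rangle$ within $\sqrt{2\epsilon_3}$. Chaining these relations gives $\sA(i)_1'|\psi'\rangle \approx (\sX_1'+(-1)^i\sZ_1')|\psi'\rangle/\sqrt{2}$ to precision $O(\sqrt{\epsilon_1}+\sqrt{\epsilon_2}+\sqrt{\epsilon_3})$; since $(\sA(i)_1')^{2}=I$ and $(\sX_1'+(-1)^i\sZ_1')^{2}/2 = I+(-1)^i\{\sX_1',\sZ_1'\}/2$, this produces the approximate anti-commutation $\|\{\sX_1',\sZ_1'\}|\psi'\rangle\| = O(\sqrt{\epsilon_1}+\sqrt{\epsilon_2}+\sqrt{\epsilon_3})$.

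Next I would introduce the canonical swap gadget $W = \Pi_+^{\sZ_1'}\otimes|0\rangle + \sX_1'\Pi_-^{\sZ_1'}\otimes|1\rangle$ on $\cH_1'\otimes\mathbb{C}^{2}$, with $\Pi_\pm^{\sZ_1'}=(I\pm\sZ_1')/2$. Direct algebra gives $W\sZ_1'=(I\otimes\sZ)W$ exactly, while $(W\sX_1'-(I\otimes\sX)W)|\psi'\rangle|0\rangle$ reduces to a linear expression in $\{\sX_1',\sZ_1'\}|\psi'\rangle$. Trimming $W$ to the isometry $U:\cH_1'\to\cH_1$ of the statement and collecting the three error sources (SOS plus the two Mayers--Yao residuals) then delivers the bounds \eqref{2-13-3} for $\sX_1'$ and $\sZ_1'$ with $\delta_1=\sum_{j=1}^{3}\hat{c}_j\epsilon_j^{1/2}$.

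For the bounds on $\sA(i)_1'$ I would \emph{not} exploit any algebraic identification with $(\sX_1'+(-1)^i\sZ_1')/\sqrt{2}$, since that relation only holds approximately and only on $|\psi'\rangle$. Instead I would combine \eqref{ep6} and \eqref{ep7}, which directly determine the action of $U\sA(i)_1'U^\dagger$ on the Pauli basis of $\cH_1$, with \eqref{2-13-2}, which bounds the component orthogonal to that basis. The extra $\sqrt{2}\bigl(\epsilon_2^{1/4}+\epsilon_3^{1/4}\bigr)$ terms in $\delta_2$ arise from a double root-taking: the Mayers--Yao inequalities deliver only $L^{2}$-closeness of state vectors, and promoting that to operator-norm closeness on the two-dimensional output space costs one further square root via a spectral Cauchy--Schwarz estimate. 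The main obstacle is precisely this last step; the first three stages are routine self-testing bookkeeping, but extracting genuine operator-norm bounds on the image of $U$ with the sharper $\epsilon_j^{1/4}$ dependence demands a careful decomposition of $\cH_1'$ into an effective qubit summand and a junk remainder, and it is this sharpened dependence that ultimately drives the $m^{-1/4}$ precision quoted in the main text.
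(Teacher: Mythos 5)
Your first two stages are essentially correct and largely parallel the paper's own route: the Mayers--Yao step is the paper's Lemma \ref{L2} verbatim, and your SOS identity is a self-contained replacement for the paper's Lemma \ref{L1}, which instead cites Theorem 2 of McKague et al.\ to extract $\|(\sX_2'\sZ_2'+\sZ_2'\sX_2')\ket{\psi'}\|=O(\epsilon_1^{1/2})$ from \eqref{2-13-1}. Note, though, that the paper keeps the anticommutation on system 2 and derives the $\sA(i)_1'$ identifications not from the CHSH term but from \eqref{ep6}, \eqref{ep7} and the cross-correlator bound \eqref{2-13-2} (its Lemma \ref{L2-2}); your SOS route would make those three hypotheses redundant for that purpose, which is legitimate (it proves a stronger bound, and the constants are unspecified) but is a genuinely different allocation of the hypotheses than the one that produces the stated form of $\delta_2$.

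The gap is in your final stage, which you yourself flag as ``the main obstacle'' but then resolve incorrectly. A one-sided swap gadget $W$ on $\cH_1'$ only yields closeness of $W\sX_1'W^\dagger$ to $I\otimes\sX$ \emph{on the single vector} $(W\otimes I)\ket{\psi'}$; to promote this to an operator-norm bound on the output qubit you must know that the reduced state of that vector on the output qubit is bounded away from singular, otherwise the two operators may differ arbitrarily on the under-populated direction. The paper's Lemmas \ref{L3} and \ref{L4} handle this by using the \emph{two-sided} isometry $U=U_2U_1$ and showing $U\ket{\psi'}\approx\ket{junk}\ket{\psi}$ with $\ket{\psi}$ maximally entangled across $\cH_1\otimes\cH_2$, whence $\|A\|\le\sqrt{2}\,\|(A\otimes I)\ket{junk}\ket{\psi}\|$ for any qubit operator $A$ --- a constant-factor loss, \emph{not} an extra square root. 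Your claim that promoting $L^2$-closeness of vectors to operator-norm closeness ``costs one further square root'' is therefore wrong, and it is internally inconsistent with your own $\delta_1$: if that promotion cost a root, $\delta_1$ would also be $O(\epsilon_j^{1/4})$ rather than the claimed $\sum_j\hat{c}_j\epsilon_j^{1/2}$. The terms $\sqrt{2}(\epsilon_2^{1/4}+\epsilon_3^{1/4})$ in $\delta_2$ actually originate in Lemma \ref{L2-2}: expanding $\|(\sA(0)_1'-(\sZ_2'+\sX_2')/\sqrt{2})\ket{\psi'}\|^2$ produces the anticommutator expectation $\langle\{\sX_2',\sZ_2'\}\rangle$, which must be traded for $\langle\sX_1'\sX_2'+\sZ_1'\sZ_2'\rangle$ (the quantity controlled by \eqref{2-13-2}, which in the proposition's display carries a typo) at a cost of $\sqrt{\epsilon_2}+\sqrt{\epsilon_3}$ \emph{inside} the square, giving quarter powers after a single square root. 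To complete your argument you should either adopt the two-sided isometry, or supply the missing proof that $\langle\psi'|\sX_1'|\psi'\rangle$ and $\langle\psi'|\sZ_1'|\psi'\rangle$ are small so that the effective output qubit is near-maximally mixed.
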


This proposition will be shown in the next subsection.
Also, we prepare the following lemma.

\begin{lemma}\label{L02}
Given an acceptance probability $\beta$ and a significance level $\alpha$, 
there exist positive numbers $c'>0$ and $c''>c_1>0$ satisfying the following.
If the state $(|0,+\rangle + |1,-\rangle)/\sqrt{2}$ and measurement are prepared with no error, 
Test {\bf (2)} of the above $c_1$ is passed with probability $\beta$ (the acceptability).
Once all of the conditions in Step \textbf{(2-5)} with the above $c_1$ are satisfied, with the significance level $\alpha$, we can guarantee the conditions \eqref{2-13-1}-\eqref{2-13-2} with
$\epsilon_2,\epsilon_3= \frac{c'}{m}$ and $\epsilon_1=\frac{2c''}{\sqrt{m}},
\epsilon_4,\epsilon_5= \frac{c''}{\sqrt{m}}$.
\end{lemma}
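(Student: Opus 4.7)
My plan is to split the proof into the \emph{acceptability} claim (the test passes with probability $\beta$ in the ideal case) and the \emph{soundness} claim (if the test passes, conditions \eqref{2-13-1}--\eqref{2-13-2} hold with significance level $\alpha$).

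For acceptability, I first compute by hand the ideal expectations on $|\Phi\rangle=(|0,+\rangle+|1,-\rangle)/\sqrt{2}$: $\langle X_1 Z_2\rangle=\langle Z_1 X_2\rangle=1$, $\langle A(0)_1(Z_2+X_2)\rangle=\langle A(1)_1(Z_2-X_2)\rangle=\sqrt{2}$, and $\langle X_1 X_2\rangle=\langle Z_1 Z_2\rangle=0$. Since $|\Phi\rangle$ is a $+1$-eigenstate of $X_1 Z_2$ and $Z_1 X_2$, groups 1 and 2 deterministically yield outcome $+1$, so the two $Av=1$ conditions hold almost surely. For the three inequality tests built from groups 3--8, each paired-group empirical average has variance of order $1/m$ with single-shot outcomes bounded in $[-1,1]$, so Chebyshev's (or Hoeffding's) inequality yields a constant $c_1=c_1(\beta)$ such that the three inequalities simultaneously hold with probability $\geq\beta$, by a union bound at level $(1-\beta)/3$ per test.

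For soundness, I apply the interval-estimation tools of Appendix \ref{app:hyper} to each of the five test statistics. Because the $8m$ copies are partitioned \emph{at random} into the eight groups, the hypergeometric framework of Lemma \ref{LL4} applies without any i.i.d.\ assumption on the state source. For the equalities $Av[X'_1 Z'_2]=1$ and $Av[Z'_1 X'_2]=1$ on groups 1 and 2, every observed outcome is $+1$; applying Lemma \ref{LL5} (at significance level $\alpha/5$) to the binary indicator of a $-1$ outcome bounds the true $-1$-probability on a random unobserved copy by $O(1/m)$, which after converting from $\{0,1\}$-probabilities to $\pm 1$-expectations gives $\langle\psi'|X'_1 Z'_2|\psi'\rangle\geq 1-c'/m$ and $\langle\psi'|Z'_1 X'_2|\psi'\rangle\geq 1-c'/m$, establishing \eqref{ep4} and \eqref{ep5}. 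For the three paired-group inequalities I apply Lemma \ref{LL4} (again at level $\alpha/5$) to each combined statistic; this converts the test slack $c_1/\sqrt{m}$ into a bound $c''/\sqrt{m}$ on the deviation of the true expectation from its ideal value, with $c''>c_1$ depending on $\alpha,\beta$ through the function $c(\alpha,\beta)$ of Appendix \ref{app:hyper}. That delivers \eqref{ep6}, \eqref{ep7}, and \eqref{2-13-2}; summing \eqref{ep6} and \eqref{ep7} yields \eqref{2-13-1} with $\epsilon_1=2c''/\sqrt{m}$. A union bound over the five tests controls the combined significance level at $\alpha$.

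The main technical obstacle is the careful bookkeeping of constants: I must verify that the $c_1$ chosen for acceptability is compatible with the $c(\alpha,\beta)$ entering $c''$ through Lemma \ref{LL4}, so that a single $c_1$ simultaneously makes the test pass with probability $\beta$ in the ideal case and gives the announced $O(1/\sqrt{m})$ soundness guarantee. A secondary issue is the translation between $\{0,1\}$-valued variables (as used in Appendices \ref{app:bi} and \ref{app:hyper}) and $\pm 1$-valued projective measurement outcomes, which introduces only a harmless factor of $2$ into the final constants. Once this bookkeeping is settled the rest of the argument is a direct application of the Appendix B--C machinery to each of the eight paired groups of Test \textbf{(2)}.
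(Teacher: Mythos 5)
Your proposal is correct and follows essentially the same route as the paper's own proof: acceptability from the determinism of the two equality tests plus concentration (with a constant $c_1$ depending on $\beta$) for the three inequality tests, and soundness by converting each passed test into a bound on the true expectation — the $O(1/m)$ rate for the degenerate equality statistics and the $O(1/\sqrt{m})$ rate for the probabilistic ones, with $\epsilon_1=2\epsilon_4$ obtained by summing \eqref{ep6} and \eqref{ep7}. Your write-up is in fact more explicit than the paper's, which only implicitly relies on the Appendix B--C interval-estimation lemmas and omits the union-bound bookkeeping you spell out.
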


\begin{proofof}{Theorem 1}
We choose three positive numbers $c'>0$ and $c''>c_1>0 $ as in Lemma \ref{L02}.
So, Lemma \ref{L02} guarantees the acceptability. 
Once all of the conditions in Step \textbf{(2-5)} with the above $c_1$ are satisfied, with the significance level $\alpha$, we can guarantee the conditions \eqref{2-13-1}-\eqref{2-13-2} with
$\epsilon_2,\epsilon_3= \frac{c'}{m}$ and $\epsilon_1=\frac{2c''}{\sqrt{m}},
\epsilon_4,\epsilon_5= \frac{c''}{\sqrt{m}}$.
Due to Proposition \ref{T1},
using a suitable isometry $U$, with the significance level $\alpha$, we can guarantee the conditions \eqref{2-13-3}-\eqref{2-13-4} with $\delta_1,\delta_2=O((\frac{1}{m })^{\frac{1}{4}})$, which yields the conditions \eqref{e5} and \eqref{e6}.

Therefore, due to Proposition \ref{T1},
using a suitable isometry $U$, with the significance level $\alpha$, we can guarantee the conditions \eqref{2-13-3}-\eqref{2-13-4} with $\delta_1,\delta_2=O(m^{-\frac{1}{4}})$, which is the desired argument.
\end{proofof}

\begin{proofof}{Lemma \ref{L02}}
The observables in the LHS of \eqref{ep4}-\eqref{ep5} take a deterministic value in the ideal state.
So, the acceptability for Eq. \eqref{e1} is automatically satisfied.
There exists a real number $c'$ satisfying the following condition.
To accept the tests $Av [X'_1 Z'_2] = 1$ and $Av [Z'_1 X'_2] = 1$ with more than probability $\alpha$, 
the conditions \eqref{ep4}-\eqref{ep5} of $\epsilon_2,\epsilon_3= \frac{c'}{m}$ need to hold.
Once these texts $Av [X'_1 Z'_2] = 1$ and $Av [Z'_1 X'_2] = 1$ are passed,
we can guarantee the conditions \eqref{ep4}-\eqref{ep5} of $\epsilon_2,\epsilon_3= \frac{c'}{m}$ 
with significance level $\alpha$.

On the other hand, the observables in the LHS of the remaining cases
\eqref{2-13-1}, \eqref{ep4}--\eqref{2-13-2} behave probabilistically even in the ideal state.
In order that the ideal state pass the tests \eqref{e2}, \eqref{e3}, and \eqref{e4}
with probability $\beta$,
the coefficient $c_1$  needs to be a constant dependent of $\beta$.
Then, dependently of $c_1$ and $\alpha$, there exists a real number $c''>c_1$ satisfying the following condition.
To accept the tests \eqref{e1}-\eqref{e4} with more than probability $\alpha$, 
the conditions \eqref{ep6}-\eqref{2-13-2} of $\epsilon_4,\epsilon_5= \frac{c''}{\sqrt{m}}$ need to hold.
Once the tests \eqref{e1}-\eqref{e4} are passed,
we can guarantee the conditions \eqref{ep6}-\eqref{2-13-2} of $\epsilon_4,\epsilon_5= \frac{c''}{\sqrt{m}}$ 
with significance level $\alpha$.
So, by choosing $\epsilon_1= 2 \epsilon_4$, 
\eqref{2-13-1} automatically holds.
\end{proofof}

\begin{remark}
Here we compare our overhead scaling with that in \cite{McKague2012}.
Their evaluation can be summarized as follows.
Let $\epsilon$ be the statistical error of observed variables like the quantities given in \eqref{2-13-1} -\eqref{2-13-2}.
\cite{McKague2012} focuses on the matrix norm of the difference between the ideal observables and the real observables like the quantities appearing in \eqref{2-13-3} -\eqref{2-13-4} and shows that these quantities are upper bounded by $O(\epsilon^{\frac{1}{4}})$.
However, \cite{McKague2012} does not discuss the relation between the error $\epsilon$ and the number of samples $m$.
Since their statistical errors are in the probabilistic case, the error $\epsilon$ is given as $O(1/\sqrt{m})$.
Hence, the above matrix norm is bounded by $O(m^{-1/8})$.
\end{remark}

\subsection{Proof of Proposition \ref{T1}}
To show Proposition \ref{T1}, we need several lemmas.

\begin{lemma}\label{L1}
When
\begin{align}
\langle \psi'| \sA(0)_1'\left(\sX_2' +\sZ_2'\right)-\sA(1)_1'\left(\sX_2' -\sZ_2'\right) |\psi'\rangle
& \ge 2\sqrt{2} -\epsilon_1
\end{align}
we have
\begin{align}
\|(\sX_2' \sZ_2' + \sZ_2' \sX_2' )
 |\psi'\rangle\|
& \le 2 \epsilon_1' ,
\end{align}
where
$\epsilon_1':= 2^{\frac{5}{4}}\epsilon_1^{\frac{1}{2}} $.
\end{lemma}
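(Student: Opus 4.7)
The plan is a sum-of-squares (SOS) argument witnessing Tsirelson's bound for the CHSH operator
\[ S := \sA(0)_1'(\sX_2'+\sZ_2') - \sA(1)_1'(\sX_2'-\sZ_2'). \]
Introduce the Hermitian ``defect'' operators
\[ T_0 := \sA(0)_1' - \frac{\sX_2'+\sZ_2'}{\sqrt{2}}, \qquad T_1 := \sA(1)_1' - \frac{\sZ_2'-\sX_2'}{\sqrt{2}}. \]
Because the $\sA(i)_1'$ commute with $\sX_2',\sZ_2'$ (they live on disjoint factors) and all four observables square to $I$, a direct expansion gives the SOS identity $T_0^2 + T_1^2 = 4I - \sqrt{2}\,S$. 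Taking expectations in $|\psi'\rangle$ and invoking the hypothesis $\langle S\rangle \ge 2\sqrt{2}-\epsilon_1$ yields $\|T_0|\psi'\rangle\|^2 + \|T_1|\psi'\rangle\|^2 \le \sqrt{2}\,\epsilon_1$, so each $\|T_i|\psi'\rangle\|\le 2^{1/4}\sqrt{\epsilon_1}$.

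Next I would read off vector-level approximate-Pauli identifications from the combinations $T_0+T_1 = (\sA(0)_1'+\sA(1)_1')-\sqrt{2}\sZ_2'$ and $T_0-T_1 = (\sA(0)_1'-\sA(1)_1')-\sqrt{2}\sX_2'$. The triangle inequality gives $\|(T_0\pm T_1)|\psi'\rangle\| \le \epsilon_1'$, with $\epsilon_1' := 2\cdot 2^{1/4}\sqrt{\epsilon_1} = 2^{5/4}\sqrt{\epsilon_1}$. Informally, $(\sA(0)_1'+\sA(1)_1')/\sqrt{2}$ behaves like $\sZ_2'$ and $(\sA(0)_1'-\sA(1)_1')/\sqrt{2}$ like $\sX_2'$ when applied to $|\psi'\rangle$, with error $O(\sqrt{\epsilon_1})$.

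The final step exploits the cost-free algebraic identity
\[ \{\sA(0)_1'+\sA(1)_1',\,\sA(0)_1'-\sA(1)_1'\} = 2(\sA(0)_1')^2 - 2(\sA(1)_1')^2 = 0, \]
which holds because each $\sA(i)_1'$ is $\pm 1$-valued. Apply this zero anticommutator to $|\psi'\rangle$ and repeatedly substitute $(\sA(0)_1'+\sA(1)_1')|\psi'\rangle \approx \sqrt{2}\sZ_2'|\psi'\rangle$ and $(\sA(0)_1'-\sA(1)_1')|\psi'\rangle \approx \sqrt{2}\sX_2'|\psi'\rangle$ (each substitution costing an error vector of norm $\le \epsilon_1'$), using that $\sX_2',\sZ_2'$ commute with the system-1 observables so they can be pushed freely through the products. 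The ideal contributions telescope to a multiple of $\{\sX_2',\sZ_2'\}|\psi'\rangle$, while every residual error vector is controlled by the operator norm of a bounded combination of $\sA(0)_1'\pm\sA(1)_1'$ and $\sqrt{2}\sX_2',\sqrt{2}\sZ_2'$ applied to an $\epsilon_1'$-small vector. Collecting the bounds yields $\|\{\sX_2',\sZ_2'\}|\psi'\rangle\| \le 2\epsilon_1'$.

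The conceptual content is entirely captured by the SOS identity in step one; the remainder is linear algebra with careful bookkeeping of constants so that they collapse to the claimed factor $2\epsilon_1'$. This bookkeeping is the only mildly tedious part, but is routine. The hard part of the broader argument will be upgrading this rigidity statement to Proposition 1 (which also tracks $\sA(0)_1'\to (\sX_1+\sZ_1)/\sqrt{2}$ etc.\ through an explicit isometry), not the lemma itself.
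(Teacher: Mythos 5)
The paper gives no proof of this lemma at all; it is discharged with a one-line citation to Theorem~2 of McKague--Yang--Scarani \cite{McKague2012}. Your self-contained sum-of-squares derivation is therefore a genuinely different route, and its core is correct: with $T_0=\sA(0)_1'-(\sX_2'+\sZ_2')/\sqrt2$ and $T_1=\sA(1)_1'-(\sZ_2'-\sX_2')/\sqrt2$, the cross terms $\pm\tfrac12\{\sX_2',\sZ_2'\}$ in $T_0^2$ and $T_1^2$ cancel, giving $T_0^2+T_1^2=4I-\sqrt2\,S$ and hence $\|T_0\ket{\psi'}\|^2+\|T_1\ket{\psi'}\|^2\le\sqrt2\,\epsilon_1$. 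What your approach buys is an explicit, checkable derivation and constant; what the paper's citation buys is brevity and consistency with the reference it reuses for the isometry construction in the subsequent lemmas.

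The one genuine issue is the final constant, which you wave off as routine bookkeeping. The accounting you actually describe --- bound $\|(T_0\pm T_1)\ket{\psi'}\|\le\epsilon_1'$ and multiply by the operator norms $\|\sA(0)_1'\pm\sA(1)_1'\|\le2$ and $\|\sqrt2\,\sX_2'\|=\sqrt2$ for the two substitutions in each of the two products --- gives $\|\{\sX_2',\sZ_2'\}\ket{\psi'}\|\le(2+\sqrt2)\,\epsilon_1'\approx3.41\,\epsilon_1'$, which overshoots the claimed $2\epsilon_1'$. To land inside the stated bound, keep the errors attached to $T_0\ket{\psi'}$ and $T_1\ket{\psi'}$ separately, e.g.\ via the exact identity
\begin{align}
\{\sX_2',\sZ_2'\}\ket{\psi'}=-\Big(\sA(0)_1'+\tfrac{\sX_2'+\sZ_2'}{\sqrt2}\Big)T_0\ket{\psi'}+\Big(\sA(1)_1'+\tfrac{\sZ_2'-\sX_2'}{\sqrt2}\Big)T_1\ket{\psi'},
\end{align}
and use $\|T_0\ket{\psi'}\|+\|T_1\ket{\psi'}\|\le\sqrt2\,\big(\|T_0\ket{\psi'}\|^2+\|T_1\ket{\psi'}\|^2\big)^{1/2}\le 2^{3/4}\epsilon_1^{1/2}=\epsilon_1'/\sqrt2$ rather than the individual or combined triangle-inequality bounds. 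Since each prefactor has operator norm at most $1+\sqrt2$, this yields $\|\{\sX_2',\sZ_2'\}\ket{\psi'}\|\le\tfrac{1+\sqrt2}{\sqrt2}\,\epsilon_1'\approx1.71\,\epsilon_1'\le2\epsilon_1'$. So the lemma's constant is attainable by your method, but not by the literal error cascade you outline; the fix is small but necessary.
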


Lemma \ref{L1} follows from Theorem 2 of \cite{McKague2012}.

\begin{lemma}\label{L2}
When
\begin{align}
\langle \psi'| \sX_1' \sZ_2'  |\psi'\rangle
& \ge 1 -\epsilon_2 \label{24-1} \\
\langle \psi'| \sZ_1' \sX_2'  |\psi'\rangle
& \ge 1 -\epsilon_3 ,
\end{align}
we have
\begin{align}
\|  (\sX_1'- \sZ_2')  |\psi'\rangle \|
& \le \epsilon_2' \\
\| (\sZ_1'- \sX_2')  |\psi'\rangle \|
& \le \epsilon_3' ,
\end{align}
where
$\epsilon_j':=2^{\frac{1}{2}}
\epsilon_j^{\frac{1}{2}}$ for $j=2,3$.
\end{lemma}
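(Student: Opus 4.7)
The plan is to prove each of the two conclusions by a direct two‑line computation, exploiting the fact that $X_1'$ and $Z_2'$ are projective observables (so they square to the identity) and that they act on different tensor factors (so they commute). The same structure works for $Z_1'$ and $X_2'$.

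First I would expand the squared norm as an expectation:
\begin{align}
\|(X_1' - Z_2')|\psi'\rangle\|^2
= \langle \psi'| (X_1' - Z_2')^2 |\psi'\rangle .
\end{align}
Because $X_1'$ lives on $\mathcal{H}_1'$ and $Z_2'$ lives on $\mathcal{H}_2'$, they commute, and since each is a $\pm 1$‑valued observable one has $(X_1')^2 = (Z_2')^2 = I$. Thus the cross terms combine and
\begin{align}
(X_1' - Z_2')^2 = 2I - 2 X_1' Z_2' ,
\end{align}
so the squared norm equals $2\bigl(1 - \langle \psi'| X_1' Z_2' |\psi'\rangle\bigr) \le 2\epsilon_2$ by the hypothesis \eqref{24-1}. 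Taking the square root gives $\|(X_1' - Z_2')|\psi'\rangle\| \le \sqrt{2\epsilon_2} = \epsilon_2'$, as required.

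The second inequality is obtained by the identical argument with the roles of the two subsystems and of $X,Z$ swapped, using $\langle \psi'|Z_1' X_2'|\psi'\rangle \ge 1 - \epsilon_3$ to conclude $\|(Z_1' - X_2')|\psi'\rangle\| \le \sqrt{2\epsilon_3} = \epsilon_3'$.

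There is essentially no obstacle here: the only assumptions I need are that the untrusted measurement operators are $\pm 1$‑valued (which is built into Test \textbf{(2)}, since they are described as projective decompositions of observables) and that operators on $\mathcal{H}_1'$ commute with operators on $\mathcal{H}_2'$ (tensor product structure of the bipartite system). Both are assumptions already made in the setup, so the lemma reduces to the standard ``correlation close to $1$ implies closeness as vectors'' identity, with no CHSH‑style sum‑of‑squares manipulation of the kind needed for Lemma \ref{L1}.
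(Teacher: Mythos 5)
Your proof is correct and is essentially the paper's argument in a marginally more direct form: the paper writes $\sX_1'\sZ_2' = P-(I-P)$ and bounds $\|(I-P)\ket{\psi'}\|$ via the projection property, which amounts to exactly your expansion $(\sX_1'-\sZ_2')^2 = 2I - 2\sX_1'\sZ_2'$ using $\pm1$-valued commuting observables. Both yield $\epsilon_j' = \sqrt{2\epsilon_j}$, so there is nothing to add.
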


\begin{proof}
Now, we make the spectral decomposition of $\sX_1' \sZ_2'$ as $\sX_1' \sZ_2'=P-(I-P)$, where $P$ is a projection. \eqref{24-1} implies that
$\langle \psi'| (I-P)  |\psi'\rangle
\le \frac{\epsilon_2}{2}$.
Schwarz inequality yields that
\begin{align}
\frac{1}{2} \|  (\sX_1'- \sZ_2' )  |\psi'\rangle\| & = \frac{1}{2} \| (I-\sX_1' \sZ_2' )  |\psi'\rangle\| \nonumber\\
& = \| (I-P)  |\psi'\rangle\| \nonumber \\
& \le \sqrt{\frac{\epsilon_2}{2}}.
\end{align}
Similarly, we obtain other inequalities.
\end{proof}

\begin{lemma}\label{L2-2}
When
\begin{align}
\langle \psi'| \sX_1' \sZ_2'  |\psi'\rangle
& \ge 1 -\epsilon_2 \\
\langle \psi'| \sZ_1' \sX_2'  |\psi'\rangle
& \ge 1 -\epsilon_3 ,\\
\langle \psi'| \sA(0)_1' (\sZ_2'+\sX_2')  |\psi'\rangle
& \ge 2 -\epsilon_4 ,\\
\langle \psi'| \sA(1)_1' (\sZ_2'-\sX_2')  |\psi'\rangle
& \ge 2 -\epsilon_4 ,\\
|\langle \psi'| \sX_1' \sX_2'+\sZ_1' \sZ_2'  |\psi'\rangle|
& \le \epsilon_5 ,
\end{align}
we have
\begin{align}
\left\|  \left(\sA(0)_1'-\frac{\sZ_2'+\sX_2'}{\sqrt{2}}\right)  |\psi'\rangle \right\|
& \le \epsilon_4' \label{T5-11-A}\\
\left\|  \left(\sA(1)_1'-\frac{\sZ_2'-\sX_2'}{\sqrt{2}}\right)  |\psi'\rangle \right\|
& \le \epsilon_4' ,\label{T5-11-B}
\end{align}
where
$\epsilon_4':=
\sqrt{ \sqrt{2}\epsilon_4
+ \frac{1}{2}\epsilon_5
+ \sqrt{\epsilon_2}+ \sqrt{\epsilon_3}}$.
\end{lemma}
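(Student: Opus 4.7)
The plan is to bound $\|(\sA(0)_1' - (\sZ_2'+\sX_2')/\sqrt{2})|\psi'\rangle\|^{2}$ directly by expansion, and then reduce the only remaining troublesome term, namely the anticommutator $\{\sZ_2',\sX_2'\}$, to the observable quantity $\sX_1'\sX_2'+\sZ_1'\sZ_2'$ by using the near-perfect correlations from Lemma \ref{L2}.

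First I would expand the square. Since $\sA(0)_1'$ is a $\pm 1$-valued observable, $\sA(0)_1'^{2}=I$, and since $\sX_2'^{2}=\sZ_2'^{2}=I$ one has $(\sZ_2'+\sX_2')^{2}=2I+\{\sZ_2',\sX_2'\}$. Hence
\begin{align}
& \Bigl\|\Bigl(\sA(0)_1'-\tfrac{\sZ_2'+\sX_2'}{\sqrt{2}}\Bigr)|\psi'\rangle\Bigr\|^{2} \nonumber\\
& = 2-\sqrt{2}\,\langle\psi'|\sA(0)_1'(\sZ_2'+\sX_2')|\psi'\rangle+\tfrac{1}{2}\langle\psi'|\{\sZ_2',\sX_2'\}|\psi'\rangle.
\end{align}
The middle term is controlled by the assumed lower bound on $\langle\sA(0)_1'(\sZ_2'+\sX_2')\rangle$, contributing $\sqrt{2}\,\epsilon_4$ (after cancellation of the leading $2$). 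What remains is to show that the anticommutator expectation is small.

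The key step is bounding $\langle\psi'|\{\sZ_2',\sX_2'\}|\psi'\rangle$. Lemma \ref{L2} tells us $\sZ_2'|\psi'\rangle=\sX_1'|\psi'\rangle+|e_2\rangle$ with $\||e_2\rangle\|\le\epsilon_2'$, and $\sX_2'|\psi'\rangle=\sZ_1'|\psi'\rangle+|e_3\rangle$ with $\||e_3\rangle\|\le\epsilon_3'$. Because $\sX_1',\sZ_1'$ act on $\cH_1'$ while $\sX_2',\sZ_2'$ act on $\cH_2'$, they commute, so after applying a second operator on $\cH_2'$ I get
\begin{align}
\sX_2'\sZ_2'|\psi'\rangle &= \sX_1'\sX_2'|\psi'\rangle+\sX_2'|e_2\rangle,\\
\sZ_2'\sX_2'|\psi'\rangle &= \sZ_1'\sZ_2'|\psi'\rangle+\sZ_2'|e_3\rangle.
\end{align}
Taking the inner product with $\langle\psi'|$ and using Cauchy--Schwarz on the error terms (unitaries preserve norms) yields
\begin{align}
\bigl|\langle\psi'|\{\sZ_2',\sX_2'\}|\psi'\rangle\bigr|
\le \epsilon_5+\epsilon_2'+\epsilon_3',
\end{align}
where the first summand uses the hypothesis $|\langle\sX_1'\sX_2'+\sZ_1'\sZ_2'\rangle|\le\epsilon_5$.

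Combining these two bounds and substituting $\epsilon_2',\epsilon_3'$ from Lemma \ref{L2} gives the claimed $\epsilon_4'$ after taking the square root. The argument for \eqref{T5-11-B} is identical with signs flipped: $\sA(1)_1'$ pairs with $(\sZ_2'-\sX_2')/\sqrt{2}$ and the cross expansion still introduces the same anticommutator $\{\sZ_2',\sX_2'\}$ (now with opposite sign), so the same bound applies. The main obstacle is the anticommutator step; it is precisely here that all three assumptions \eqref{24-1}, the analogue for $\sZ_1'\sX_2'$, and the $\sX_1'\sX_2'+\sZ_1'\sZ_2'$ bound must be deployed simultaneously, and one must be careful that the error operators $\sX_2'|e_2\rangle,\sZ_2'|e_3\rangle$ are handled by inner-product/Cauchy--Schwarz rather than operator-norm arguments, because we have no a priori control over $|\psi'\rangle$ outside the subspace identified by the near-perfect correlations.
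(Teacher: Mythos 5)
Your proposal is correct and follows essentially the same route as the paper: expand the squared norm, absorb the cross term via the assumed lower bound on $\langle \sA(0)_1'(\sZ_2'+\sX_2')\rangle$, and convert the anticommutator $\{\sZ_2',\sX_2'\}$ into $\sX_1'\sX_2'+\sZ_1'\sZ_2'$ using the near-perfect $\sX_1'\sZ_2'$ and $\sZ_1'\sX_2'$ correlations (the paper inserts $(I-\sX_1'\sZ_2')$ and $(I-\sZ_1'\sX_2')$ where you use error vectors from Lemma \ref{L2}, which is the same estimate and yields a marginally tighter constant still within the stated $\epsilon_4'$). Note only that, like the paper's own proof, your cancellation of the leading $2$ tacitly uses the hypothesis $\langle \sA(0)_1'(\sZ_2'+\sX_2')\rangle \ge \sqrt{2}-\epsilon_4$ (as in Proposition \ref{T1}) rather than the $2-\epsilon_4$ appearing in the lemma statement, which is evidently a typo.
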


\begin{proof}
Since
\begin{align}
\| (\sX_1' \sZ_2' -I) |\psi'\rangle \|
& \le 2 \sqrt{\epsilon_2} ,\\
\| (\sZ_1' \sX_2' -I) |\psi'\rangle \|
& \le 2 \sqrt{\epsilon_3} ,
\end{align}
we have
\begin{align}
& |\langle \psi'|\sZ_2'\sX_2'+\sX_2'\sZ_2'|\psi'\rangle - \langle \psi'|\sZ_2'\sZ_1'+\sX_2'\sX_1'|\psi'\rangle| \nonumber \\
\le& \|\langle \psi'|\sZ_2'  \sX_2'\| \| (I -\sZ_1' \sX_2')|\psi'\rangle \| \nonumber\\
+& \| \langle \psi'| \sX_2'\sZ_2' \| \| (I-\sX_1' \sZ_2' ) |\psi'\rangle \| \nonumber \\
\le& 2 \sqrt{\epsilon_2}+ 2 \sqrt{\epsilon_3}.
\end{align}
So, we obtain \eqref{T5-11-A} as follows.
\begin{align}
&\left\|  \left(\sA(0)_1'-\frac{\sZ_2'+\sX_2'}{\sqrt{2}}\right)  |\psi'\rangle \right\|^2 \nonumber  \\
=& \langle \psi'| {\sA(0)_1'}^2- \sqrt{2}\sA(0)_1'(\sZ_2'+\sX_2') \nonumber\\
+& \frac{{\sZ_2'}^2+{\sX_2'}^2+ \sZ_2'\sX_2'+\sX_2'\sZ_2'}{2}|\psi'\rangle \nonumber \\
=& 2-\langle \psi'| \sqrt{2}\sA(0)_1'(\sZ_2'+\sX_2') + \frac{\sZ_2'\sX_2'+\sX_2'\sZ_2'}{2} |\psi'\rangle\nonumber  \\
\le & 2- \sqrt{2}\langle \psi'|\sA(0)_1'(\sZ_2'+\sX_2')|\psi'\rangle \nonumber\\
+& \frac{1}{2} \langle \psi'|\sZ_2'\sZ_1'+\sX_2'\sX_1' |\psi'\rangle
+ \sqrt{\epsilon_2}+ \sqrt{\epsilon_3} \nonumber \\
\le &
2- 2 + \sqrt{2}\epsilon_4
+ \frac{1}{2}\epsilon_5
+ \sqrt{\epsilon_2}+ \sqrt{\epsilon_3} \nonumber \\
=&
 \sqrt{2}\epsilon_4
+ \frac{1}{2}\epsilon_5
+ \sqrt{\epsilon_2}+ \sqrt{\epsilon_3} .
\end{align}
Similarly, we have
\begin{align}
&\left\|  \left(\sA(1)_1'-\frac{\sZ_2'+\sX_2'}{\sqrt{2}}\right)  |\psi'\rangle \right\|^2 \nonumber  \\
\le& 2- \sqrt{2}\langle \psi'|\sA(1)_1'(\sZ_2'-\sX_2')|\psi'\rangle \nonumber\\
-& \frac{1}{2} \langle \psi'|\sZ_2'\sZ_1'+\sX_2'\sX_1' |\psi'\rangle
+ \sqrt{\epsilon_2}+ \sqrt{\epsilon_3} \nonumber \\
\le &
 \sqrt{2}\epsilon_4
+ \frac{1}{2}\epsilon_5
+ \sqrt{\epsilon_2}+ \sqrt{\epsilon_3} ,
\end{align}
which shows \eqref{T5-11-B}.
\end{proof}

\begin{lemma}\label{L3}
When
\begin{align}
\|(\sX_2' \sZ_2' + \sZ_2' \sX_2' )
 |\psi'\rangle\|
& \le 2 \epsilon_1' ,\\
\|  (\sX_1'- \sZ_2')  |\psi'\rangle \|
& \le \epsilon_2' ,\\
\| (\sZ_1'- \sX_2')  |\psi'\rangle \|
& \le \epsilon_3' ,\\
\left\|  \left(\sA(0)_1'-\frac{\sZ_2'+\sX_2'}{\sqrt{2}}\right)  |\psi'\rangle \right\|
& \le \epsilon_4' ,\\
\left\|  \left(\sA(1)_1'-\frac{\sZ_2'-\sX_2'}{\sqrt{2}}\right)  |\psi'\rangle \right\|
& \le \epsilon_4' ,
\end{align}
there exist local isometries $U_j:\cH_j' \to \cH_j$ for $j=1,2$
such that
\begin{align}
\|  U  |\psi' \rangle- |junk\rangle |\psi \rangle\| & \le \delta_1' ,\label{34}\\
\|  U \sX_1' |\psi' \rangle-\sX_1 |junk\rangle |\psi \rangle\| & \le \delta_1' ,\label{34-1}\\
\|  U \sZ_1' |\psi' \rangle-\sZ_1 |junk\rangle |\psi \rangle\| & \le \delta_1' ,\label{34-3}\\
\|  U \sX_2' |\psi' \rangle-\sX_2 |junk\rangle |\psi \rangle\| & \le \delta_1' ,\label{34-4}\\
\|  U \sZ_2' |\psi' \rangle-\sZ_2 |junk\rangle |\psi \rangle\| & \le \delta_1' ,\label{34-6}\\
\left\|  U \sA(0)_1' |\psi' \rangle-\frac{\sX_1+\sZ_1}{\sqrt{2}} |junk\rangle |\psi \rangle\right\| & \le \delta_2' ,\label{34-7}\\
\left\|  U \sA(1)_1' |\psi' \rangle-\frac{\sX_1-\sZ_1}{\sqrt{2}} |junk\rangle |\psi \rangle\right\| & \le \delta_2' ,\label{34-7B}
\end{align}
where $\delta_1':=\sum_{j=1}^3 c_j' \epsilon_j'$
and $\delta_2':=\sqrt{2}\delta_1'+ \epsilon_4' $
and $U:=U_2 U_1$.
\end{lemma}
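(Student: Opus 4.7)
The plan is to follow the standard self-testing swap-isometry construction due to McKague, specialized to the target state $(|0,+\rangle + |1,-\rangle)/\sqrt{2}$. For each party $j \in \{1,2\}$, I introduce an ancilla qubit $A_j$ prepared in $|0\rangle$ and regard it as the output space $\cH_j$. The isometry $U_j:\cH_j'\to\cH_j'\otimes\cH_j$ is built as a product of Hadamards on $A_j$ sandwiching two controlled operations that use $\sZ_j'$ and $\sX_j'$ as the controlled unitaries on the primed system; the original primed register plays the role of the junk register after the swap. The key property is that whenever $\sX_j'$ and $\sZ_j'$ exactly anti-commute on $|\psi'\rangle$ and behave like a Pauli pair, the composition $U=U_2 U_1$ maps $|\psi'\rangle$ to $|junk\rangle|\psi\rangle$ and intertwines $\sX_j', \sZ_j'$ on the primed side with $\sX_j, \sZ_j$ on the ancilla side.

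The first step is to boost the existing hypotheses into what the swap gadget actually needs. Using Lemma~\ref{L2} to replace $\sX_1'$ by $\sZ_2'$ and $\sZ_1'$ by $\sX_2'$ on $|\psi'\rangle$, and combining with the approximate anti-commutation $\|(\sX_2'\sZ_2'+\sZ_2'\sX_2')|\psi'\rangle\|\le 2\epsilon_1'$, a short chain of triangle inequalities gives analogous anti-commutations for $\sX_1',\sZ_1'$ and mixed cross-system bounds such as $\|(\sX_1'\sX_2'-I)|\psi'\rangle\|$ and $\|(\sZ_1'\sZ_2'-I)|\psi'\rangle\|$ of order $\epsilon_1'+\epsilon_2'+\epsilon_3'$. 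These are exactly the approximate Pauli relations that make the swap isometry work.

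Next I track the action of $U$ on each relevant vector, $|\psi'\rangle|00\rangle_A$, $\sX_j'|\psi'\rangle|00\rangle_A$, $\sZ_j'|\psi'\rangle|00\rangle_A$, by expanding the controlled operations into their two branches, commuting the observables past one another using the approximate anti-commutation bounds, and substituting $\sX_1'\leftrightarrow\sZ_2'$ and $\sZ_1'\leftrightarrow\sX_2'$ where needed. Each elementary step costs an additive error of order $\epsilon_j'$, and since the total number of substitutions required is a fixed constant independent of $m$, this produces the stated linear combination $\delta_1'=\sum_{j=1}^{3} c_j'\epsilon_j'$ with explicit constants $c_j'$ absorbing the bookkeeping. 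This simultaneously establishes \eqref{34}--\eqref{34-6}.

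The bounds \eqref{34-7} and \eqref{34-7B} for $\sA(0)_1'$ and $\sA(1)_1'$ then follow by writing
\begin{align}
\sA(k)_1'|\psi'\rangle = \tfrac{1}{\sqrt{2}}(\sZ_2' + (-1)^k \sX_2')|\psi'\rangle + r_k,
\end{align}
where $\|r_k\|\le\epsilon_4'$ by hypothesis, applying $U$ to both sides, and invoking \eqref{34-4} and \eqref{34-6} to transport $\sX_2'|\psi'\rangle$ and $\sZ_2'|\psi'\rangle$ across $U$ at a cost of $\delta_1'$ each. The two summands contribute $\sqrt{2}\,\delta_1'$ after the $1/\sqrt{2}$ prefactor, and $r_k$ contributes $\epsilon_4'$, matching $\delta_2'=\sqrt{2}\,\delta_1'+\epsilon_4'$. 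The main obstacle I anticipate is the bookkeeping in the middle step: keeping the errors additive and linear in $\epsilon_j'$ requires commuting controlled operators past the observables in the right order, since only the on-state norms are controlled, not operator norms, so each commutation has to be applied to a vector already known to be close to the ideal one.
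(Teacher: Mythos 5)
Your overall route is the same as the paper's: both use the McKague-type swap isometry built from controlled-$\sX_j'$ and controlled-$\sZ_j'$ gates sandwiched by Hadamards on an ancilla, establish \eqref{34}--\eqref{34-6} by expanding $U\ket{\psi'}$ into branches and repeatedly substituting $\sX_1'\leftrightarrow\sZ_2'$, $\sZ_1'\leftrightarrow\sX_2'$ and the anti-commutation on the state, and then obtain \eqref{34-7}--\eqref{34-7B} by the triangle inequality exactly as you describe, with the same accounting $\delta_2'=\sqrt{2}\delta_1'+\epsilon_4'$. The structure and the final bookkeeping are sound.

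However, there is one concretely false intermediate claim: you assert that the hypotheses yield $\|(\sX_1'\sX_2'-I)\ket{\psi'}\|$ and $\|(\sZ_1'\sZ_2'-I)\ket{\psi'}\|$ of order $\epsilon_1'+\epsilon_2'+\epsilon_3'$. This cannot be right, because the target state here is $(\ket{0,+}+\ket{1,-})/\sqrt{2}$, whose stabilizers are $\sX_1\sZ_2$ and $\sZ_1\sX_2$, not $\sX_1\sX_2$ and $\sZ_1\sZ_2$; for the ideal state one computes $\langle\psi|\sX_1\sX_2|\psi\rangle=0$, so $\|(\sX_1\sX_2-I)\ket{\psi}\|=\sqrt{2}$, and condition \eqref{2-13-2} in fact forces the sum of these two correlators to be \emph{near zero}, not near one. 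You appear to have pattern-matched to the standard Bell state $(\ket{00}+\ket{11})/\sqrt{2}$. The repair is straightforward and does not change your architecture: the cross-system relations actually available (and all that the swap gadget needs) are $\|(\sX_1'\sZ_2'-I)\ket{\psi'}\|\le\epsilon_2'$ and $\|(\sZ_1'\sX_2'-I)\ket{\psi'}\|\le\epsilon_3'$, which are immediate from the stated hypotheses since $\sX_1'$ and $\sZ_1'$ are unitary involutions; the approximate anti-commutation of $\sX_1',\sZ_1'$ on the state then follows by transporting the anti-commutator of $\sX_2',\sZ_2'$ through these substitutions. The same care is needed in your gadget itself: with both ancillas initialized in $\ket{0}$ and the gates ordered identically on both sides you would target $(\ket{00}+\ket{11})/\sqrt{2}$ rather than $(\ket{0,+}+\ket{1,-})/\sqrt{2}$; the paper initializes the ancillas in $\ket{0,+}$ and swaps the order of the $\sX_2'$- and $\sZ_2'$-controlled gates on system~2 to get the intertwining $U\sX_2'\ket{\psi'}\approx\sX_2\ket{junk}\ket{\psi}$ with the correct Pauli on the correct side.
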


\begin{proof}
We set the initial state on $\cH_1\otimes \cH_2$
to be $|0,+\rangle$.
Define
$U_1:=
(|0\rangle \langle 0|
+\sX_1'|1\rangle \langle 1|) \sH_1
(|0\rangle \langle 0|
+\sZ_1'|1\rangle \langle 1|) \sH_1 $
and
$U_2:=
(|0\rangle \langle 0|
+\sZ_2'|1\rangle \langle 1|) \sH_2
(|0\rangle \langle 0|
+\sX_2'|1\rangle \langle 1|) \sH_2 $,
where
$\sH:=
|+\rangle\langle 0|
+
|-\rangle\langle 1|
=
|0\rangle\langle +|
+
|1\rangle\langle -|$.

Hence, we have
\begin{align}
U  |\psi' \rangle & = \frac{1}{4}\left((I+\sZ_1')(I+\sX_2')|\psi'\rangle |0+\rangle \right. \nonumber\\
&+ \sZ_2' (I+\sZ_1')(I-\sX_2')|\psi'\rangle |0-\rangle \nonumber\\
&+ \sX_1'(I-\sZ_1')(I+\sX_2')|\psi'\rangle |1+\rangle \nonumber\\
&+ \left. \sX_1'\sZ_2'(I-\sZ_1')(I-\sX_2')|\psi'\rangle |1-\rangle\right).
\end{align}
When $|junk\rangle :=
\frac{\sqrt{2}}{4}(I+\sZ_1')(I+\sX_2')|\psi'\rangle $,
we have
\begin{align}
& U  |\psi' \rangle- |junk\rangle |\psi \rangle \nonumber\\
=& \frac{1}{4}( \sZ_2' (I+\sZ_1')(I-\sX_2')|\psi'\rangle |0-\rangle \nonumber\\
+& \sX_1'(I-\sZ_1')(I+\sX_2')|\psi'\rangle |1+\rangle \nonumber \\
+& (\sX_1'\sZ_2'(I-\sZ_1')(I-\sX_2')|\psi'\rangle \nonumber\\
-& (I+\sZ_1')(I+\sX_2')|\psi'\rangle )|1-\rangle).
\end{align}
We have
\begin{align}
& \| \sZ_2' (I+\sZ_1')(I-\sX_2')|\psi'\rangle |0-\rangle \| \nonumber \\
=& \| \sZ_2' (I+\sZ_1')(I-\sZ_1')|\psi'\rangle |0-\rangle \| \nonumber\\
+& \| \sZ_2' (I+\sZ_1')(\sZ_1'-\sX_2')|\psi'\rangle |0-\rangle \| \nonumber \\
\le & 2 \epsilon_3', \\
& \|
\sX_1'(I-\sZ_1')(I+\sX_2')|\psi'\rangle |1+\rangle
\| \nonumber \\
\le &
\|
\sX_1'(I-\sZ_1')(\sX_2'-\sZ_1')|\psi'\rangle |1+\rangle
\| \nonumber \\
\le & 2 \epsilon_3'.
\end{align}
Since
\begin{align}
& \sX_1'\sZ_2'(I-\sZ_1')(I-\sX_2')|\psi'\rangle \nonumber \\
=& \sZ_2'(I-\sX_2')\sX_1'(I-\sZ_1')|\psi'\rangle \nonumber\\
=& \sZ_2'(I-\sX_2')\sX_1'(I-\sX_2')|\psi'\rangle \nonumber\\
+& \sZ_2'(I-\sX_2')\sX_1'(\sX_2'-\sZ_1')|\psi'\rangle \nonumber \\
=& \sZ_2'(I-\sX_2')(I-\sX_2')\sZ_2'|\psi'\rangle \nonumber\\
+& \sZ_2'(I-\sX_2')(I-\sX_2')(\sZ_2'-\sX_1')|\psi'\rangle \nonumber\\
+& \sZ_2'(I-\sX_2')\sX_1'(\sX_2'-\sZ_1')|\psi'\rangle \nonumber \\
=& 2 \sZ_2'(I-\sX_2')\sZ_2'|\psi'\rangle+ 2 \sZ_2'(I-\sX_2')(\sZ_2'-\sX_1')|\psi'\rangle \nonumber\\
+& \sZ_2'(I-\sX_2')\sX_1'(\sX_2'-\sZ_1')|\psi'\rangle \nonumber \\
=& 2 \sZ_2'\sZ_2'(I+\sX_2')|\psi'\rangle-2 \sZ_2'(\sZ_2'\sX_2'+\sX_2'\sZ_2')|\psi'\rangle \nonumber\\
+& 2 \sZ_2'(I-\sX_2')(\sZ_2'-\sX_1')|\psi'\rangle \nonumber\\
+& \sZ_2'(I-\sX_2')\sX_1'(\sX_2'-\sZ_1')|\psi'\rangle \nonumber \\
=& (I+\sX_2')^2|\psi'\rangle -2 \sZ_2'(\sZ_2'\sX_2'+\sX_2'\sZ_2')|\psi'\rangle \nonumber\\
+& 2 \sZ_2'(I-\sX_2')(\sZ_2'-\sX_1')|\psi'\rangle \nonumber\\
+& \sZ_2'(I-\sX_2')\sX_1'(\sX_2'-\sZ_1')|\psi'\rangle \nonumber \\
=& (I+\sX_2')(I+\sZ_1')|\psi'\rangle ) +(I+\sX_2')(\sX_2'-\sZ_1')|\psi'\rangle \nonumber\\
-& 2 \sZ_2'(\sZ_2'\sX_2'+\sX_2'\sZ_2')|\psi'\rangle \nonumber \\
+& 2 \sZ_2'(I-\sX_2')(\sZ_2'-\sX_1')|\psi'\rangle \nonumber\\
+& \sZ_2'(I-\sX_2')\sX_1'(\sX_2'-\sZ_1')|\psi'\rangle,
\end{align}
we have
\begin{align}
& \|
(\sX_1'\sZ_2'(I-\sZ_1')(I-\sX_2')|\psi'\rangle
-(I+\sZ_1')(I+\sX_2')|\psi'\rangle )
\| \nonumber \\
\le &
(2+2) \epsilon_3'
+2 \epsilon_1'
+4 \epsilon_2'.
\end{align}
Thus,
\begin{align}
\|U  |\psi' \rangle- |junk\rangle |\psi \rangle\| & \le \frac{1}{4}(2 \epsilon_3'+2 \epsilon_3'+4 \epsilon_3'+2 \epsilon_1'+4 \epsilon_2') \nonumber\\
& = \frac{1}{2}(4 \epsilon_3'+ \epsilon_1'+2 \epsilon_2').
\end{align}
So, we obtain \eqref{34}.
Inequalities \eqref{34-1}-\eqref{34-6} can be shown by using the anti-commutation relation
and exchanging $\sX_1$, $\sZ_1$, $\sZ_1$ and
$\sX_2$, $\sX_2$, $\sZ_2$.
The coefficients $c_j$ for $\delta$ are given
by counting the number of these operations.

Now, we show \eqref{34-7}.
We have
\begin{align}
& \left\|  U \sA(0)_1' |\psi' \rangle -\frac{\sX_1+\sZ_1}{\sqrt{2}} |junk\rangle |\psi \rangle\right\| \nonumber \\
=& \left\|  U \sA(0)_1' |\psi' \rangle-\frac{\sZ_2+\sX_2}{\sqrt{2}} |junk\rangle |\psi \rangle\right\|\nonumber  \\
\le & \left\|  U \sA(0)_1' |\psi' \rangle-U \frac{\sZ_2'+\sX_2'}{\sqrt{2}} |\psi' \rangle \right\| \nonumber\\
+& \left\|U \frac{\sZ_2'+\sX_2'}{\sqrt{2}} |\psi'\rangle-\frac{\sZ_2+\sX_2}{\sqrt{2}} |junk\rangle |\psi \rangle\right\| \nonumber \\
=& \left\| \left(\sA(0)_1'-\frac{\sZ_2'+\sX_2'}{\sqrt{2}}\right)  |\psi'\rangle \right\| \nonumber\\
+& \frac{1}{\sqrt{2}} \left\| U \frac{\sZ_2'}{\sqrt{2}} |\psi' \rangle -\frac{\sZ_2}{\sqrt{2}} |junk\rangle |\psi \rangle\right\| \nonumber\\
+& \frac{1}{\sqrt{2}} \left\| U \frac{\sX_2'}{\sqrt{2}} |\psi' \rangle-\frac{\sX_2}{\sqrt{2}} |junk\rangle |\psi \rangle\right\| \nonumber \\
\le &
\epsilon_4'+ \sqrt{2}\delta_1'.
\end{align}
So, we obtain \eqref{34-7}.
In the same way, we can show \eqref{34-7B}.
\end{proof}

\begin{lemma}\label{L4}
The local isometries $U_j:\cH_j' \to \cH_j$ for $j=1,2$
satisfy
\begin{align}
\|  U  |\psi' \rangle- |junk\rangle |\psi \rangle\| & \le \delta_1' ,\\
\|  U \sX_1' |\psi' \rangle-\sX_1 |junk\rangle |\psi \rangle\| & \le \delta_1' ,\\
\|  U \sZ_1' |\psi' \rangle-\sZ_1 |junk\rangle |\psi \rangle\| & \le \delta_1' ,\\
\|  U \sX_2' |\psi' \rangle-\sX_2 |junk\rangle |\psi \rangle\| & \le \delta_1' ,\\
\|  U \sZ_2' |\psi' \rangle-\sZ_2 |junk\rangle |\psi \rangle\| & \le \delta_1' ,\\
\left\|  U \sA(0)_1' |\psi' \rangle
-\frac{\sX_1+\sZ_1}{\sqrt{2}} |junk\rangle |\psi \rangle\right\| & \le \delta_2' ,\\
\left\|  U \sA(1)_1' |\psi' \rangle
-\frac{\sX_1-\sZ_1}{\sqrt{2}} |junk\rangle |\psi \rangle\right\| & \le \delta_2' ,
\end{align}
for $U:=U_2 U_1$,
we have
\begin{align}
\|  U_1 \sX_1' U_1^\dagger-\sX_1 \| & \le 2\sqrt{2} \delta_1' \label{eq2}\\
\|  U_1 \sZ_1' U_1^\dagger-\sZ_1 \| & \le 2\sqrt{2} \delta_1' \\
\left\|  U_1 \sA(0)_1' U_1^\dagger-\frac{\sX_1+\sZ_1}{\sqrt{2}} \right\| & \le
\sqrt{2}(\delta_1'+\delta_2'), \\
\left\|  U_1 \sA(1)_1' U_1^\dagger-\frac{\sX_1-\sZ_1}{\sqrt{2}} \right\| & \le
\sqrt{2}(\delta_1'+\delta_2').
\end{align}
\end{lemma}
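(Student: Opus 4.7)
The plan is to convert the state-dependent approximations supplied by Lemma~\ref{L3} into the operator-norm bounds stated here, by exploiting the maximal entanglement of the target state $|\psi\rangle = (|0,+\rangle + |1,-\rangle)/\sqrt{2}$. Since $U_1$ and $U_2$ act on disjoint tensor factors and $X_1'$ is supported on $\mathcal{H}_1'$, one has $U X_1' U^\dagger = U_1 X_1' U_1^\dagger$. Applying the triangle inequality to the Lemma~\ref{L3} bounds $\|U|\psi'\rangle - |junk\rangle |\psi\rangle\| \le \delta_1'$ and $\|U X_1'|\psi'\rangle - X_1 |junk\rangle |\psi\rangle\| \le \delta_1'$, and using $\|X_1\|=1$, I obtain the state-dependent estimate
\begin{equation}
\bigl\|(U_1 X_1' U_1^\dagger - X_1)\, U|\psi'\rangle\bigr\| \le 2\delta_1'.
\end{equation}

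I would then promote this to an operator-norm bound using the fact that $|\psi\rangle$ is a maximally entangled two-qubit state: a Hadamard on $\mathcal{H}_2$ turns it into a Bell pair, so its reduced state on $\mathcal{H}_1$ is $I/2$. Consequently, for any operator $O$ acting on the trusted qubit one has $\|(O \otimes I)|\psi\rangle\|^2 = \tfrac{1}{2}\|O\|_F^2 \ge \tfrac{1}{2}\|O\|_{\mathrm{op}}^2$, giving a $\sqrt{2}$ amplification factor when passing from state-dependent to operator-norm bounds. Combined with the estimate above this yields $\|U_1 X_1' U_1^\dagger - X_1\| \le 2\sqrt{2}\,\delta_1'$, and the argument for $Z_1'$ is identical.

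For $A(0)_1'$ and $A(1)_1'$ the same triangle step uses the Lemma~\ref{L3} estimate with error $\delta_2'$ on the operator side together with $\delta_1'$ for $U|\psi'\rangle$; noting $\|(X_1 \pm Z_1)/\sqrt{2}\| = 1$, this produces a state-dependent bound of $\delta_1' + \delta_2'$, and the Bell amplification then gives the claimed $\sqrt{2}(\delta_1' + \delta_2')$.

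The main obstacle I foresee is the operator-norm promotion itself. The state-dependent estimate controls $\|(U_1 X_1' U_1^\dagger - X_1)\,\xi\|$ only for $\xi = U|\psi'\rangle$, whereas the operator norm requires control on arbitrary vectors in $\mathcal{H}_1' \otimes \mathcal{H}_1$. The promotion works because the reduced state of $U|\psi'\rangle$ on the trusted qubit is (approximately) maximally mixed so that $U|\psi'\rangle$ effectively probes every direction on $\mathcal{H}_1$, and because the explicit unitaries $U_1, U_2$ constructed in Lemma~\ref{L3} localize the action of $U_1 X_1' U_1^\dagger - X_1$ on the junk factor. Carefully executing this extension---most naturally via a Schmidt decomposition across the trusted/junk cut---is the technical heart of the proof.
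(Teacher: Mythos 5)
Your proposal follows essentially the same two-step route as the paper: a triangle inequality on the conclusions of Lemma~\ref{L3} to get a state-dependent bound of $2\delta_1'$ (resp.\ $\delta_1'+\delta_2'$ for $\sA(0)_1',\sA(1)_1'$), followed by a $\sqrt{2}$ promotion to operator norm using the maximal entanglement of $|\psi\rangle$. The one substantive difference is your choice of anchor state. You bound $\|(U_1\sX_1'U_1^\dagger-\sX_1)\,U|\psi'\rangle\|$, whereas the paper isolates the error terms so as to bound $\|(U_1\sX_1'U_1^\dagger-\sX_1)\,|junk\rangle|\psi\rangle\|$; concretely it writes $U_1\sX_1'U_1^\dagger|junk\rangle|\psi\rangle = U\sX_1'|\psi'\rangle + U\sX_1'U^\dagger(|junk\rangle|\psi\rangle - U|\psi'\rangle)$ and applies Lemma~\ref{L3} to both correction terms, again getting $2\delta_1'$. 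This choice matters for the step you correctly identify as the obstacle: the $\sqrt{2}$ amplification $\|O\|\le\sqrt{2}\,\|(O\otimes I)|\psi\rangle\|$ is exact only when the anchor really has reduced state $I/2$ on the trusted qubit, which holds for $|junk\rangle|\psi\rangle$ but only approximately for $U|\psi'\rangle$; as written, your chain would either degrade the constant (an extra triangle step via $\|U_1\sX_1'U_1^\dagger-\sX_1\|\le 2$ costs another $2\delta_1'$) or require the Schmidt-decomposition analysis you defer. The fix is one line --- re-run your triangle inequality with $|junk\rangle|\psi\rangle$ as the vector being acted on --- after which your argument coincides with the paper's and yields the stated constants $2\sqrt{2}\delta_1'$ and $\sqrt{2}(\delta_1'+\delta_2')$.
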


\begin{proof}
We have
\begin{align}
& U_1 \sX_1' U_1^\dagger |\psi\rangle |junk\rangle \nonumber \\
=& U_2 U_1 \sX_1' U_1^\dagger U_2^\dagger |junk\rangle |\psi\rangle \nonumber \\
=& U_2 U_1 \sX_1' U_1^\dagger U_2^\dagger U_2 U_1  |\psi'\rangle \nonumber\\
+& U_2 U_1 \sX_1' U_1^\dagger U_2^\dagger(|junk\rangle|\psi\rangle -U_2 U_1 |\psi'\rangle) \nonumber\\
=& U_2 U_1 \sX_1' |\psi'\rangle + U_2 U_1 \sX_1' U_1^\dagger U_2^\dagger (|junk\rangle|\psi\rangle -U_2 U_1 |\psi'\rangle) \nonumber\\
=& \sX_1 |\psi\rangle|junk\rangle + (U_2 U_1 \sX_1' |\psi'\rangle - \sX_1 |junk\rangle|\psi\rangle) \nonumber\\
+& U_2 U_1 \sX_1' U_1^\dagger U_2^\dagger (|junk\rangle|\psi\rangle -U_2 U_1 |\psi'\rangle).
\end{align}
Hence, we obtain
\begin{align}
\|
U_1 \sX_1' U_1^\dagger |\psi\rangle
-
\sX_1 |\psi\rangle\|
\le 2 \delta_1',
\end{align}
which implies that
\begin{align}
\|U_1 \sX_1' U_1^\dagger
-\sX_1 \|
\le 2\sqrt{2} \delta_1'.
\end{align}
So, we obtain \eqref{eq2}.
Similarly, we obtain other inequalities.
\end{proof}

\begin{proofof}{Proposition \ref{T1}}
Choose
$\delta_1=2\sqrt{2}\delta_1'$
and
$\delta_2=\sqrt{2}(\delta_1'+\delta_2')
=
\sqrt{2}((1+\sqrt{2})\delta_1'+
\epsilon_4')$.
Then, combining these lemmas, we obtain Proposition \ref{T1}.
\end{proofof}

\section{Proof of Theorem 2}\label{app:Th2}
To show  Theorem 2 of the main body, we prepare Lemma \ref{L01} as follows.

\begin{lemma}\label{L01}
Given an acceptance probability $\beta$ and a significance level $\alpha$, 
there exist positive numbers $c'''>c_4>0$ satisfying the following.
Here, we use the same $c'$ as the proof of Theorem 1.
If the state $|G\rangle$ and our measurements are prepared with no error, 
Test {\bf (4)} with $c_1=c_4 (\log n)^{1/2}$ is passed with probability $\beta$ (the acceptability).
Once all of the conditions in Step \textbf{(2-5)} 
with $c_1= c_4 (\log n)^{1/2}$ 
are satisfied in all sites, with the significance level $\alpha$, we can guarantee the conditions \eqref{2-13-1}-\eqref{2-13-2} in all sites with
$\epsilon_2,\epsilon_3= \frac{c'}{m}$ and $\epsilon_1=\frac{2c''' (\log n)^{1/2}}{\sqrt{m}},
\epsilon_4,\epsilon_5= \frac{c''' (\log n)^{1/2}}{\sqrt{m}}$.
\end{lemma}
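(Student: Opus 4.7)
The plan is to bootstrap Lemma \ref{L02} from a single Bell-pair test to the $n$-site graph by combining independence in the ideal case with a union bound in the general case. First, I would observe that in the B-, W-, and R-protocols, after the intermediate $Z'$ measurements on non-target sites in steps \textbf{(3-2)} and \textbf{(3-3)}, the non-conflict condition on the partitions $S_{B,k}$, $S_{W,k}$, $S_{R,k}$ forces the residual ideal state to factorize into a tensor product $\bigotimes_i |\Phi'\rangle_{i,j_i}$. Hence in the ideal case the per-site statistics produced by Test \textbf{(2)} at different sites are mutually independent, which is the structural ingredient that makes the site-by-site reduction possible.

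For the acceptability part I would set the per-site acceptance probability to $\beta^{1/n}\approx 1-\log(1/\beta)/n$ so that the product of per-site probabilities is at least $\beta$. By the central-limit-theorem approximation underlying the tests (\ref{e2})-(\ref{e4}), this forces $1-\Phi(c_1)=O(1/n)$, and the Gaussian tail asymptotic $\Phi^{-1}(1-\eta)\sim\sqrt{2\log(1/\eta)}$ yields $c_1=\Theta(\sqrt{\log n})$, fixing a constant $c_4$ with $c_1=c_4(\log n)^{1/2}$. For the significance part I would apply a union bound across the $n$ sites: to conclude (\ref{2-13-1})-(\ref{2-13-2}) simultaneously with overall significance $\alpha$, each per-site conclusion needs significance $\alpha/n$, which by the same Gaussian tail estimate again requires $c_1=\Theta(\sqrt{\log n})$. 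Taking the larger of the two constants provides a single $c_4$ controlling both directions. The error levels $\epsilon_2,\epsilon_3=c'/m$ are inherited from Lemma \ref{L02} unchanged because the deterministic tests in (\ref{e1}) are analyzed via Lemma \ref{LL5} rather than a Gaussian quantile, while $\epsilon_1,\epsilon_4,\epsilon_5$ pick up the extra $\sqrt{\log n}$ factor from promoting $c''$ in Lemma \ref{L02} to $c'''(\log n)^{1/2}$.

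The main obstacle is that the independence argument is valid only in the ideal case, whereas the significance direction must cope with an untrusted state on which the per-site statistics may be arbitrarily correlated. I would handle this by invoking Lemma \ref{LL4} site-wise: that lemma bounds the guaranteed precision on each observed average \emph{without} assuming product structure of the underlying state, so the union bound applies directly to events of the form ``site $i$ violates one of (\ref{2-13-1})-(\ref{2-13-2})''. This is what keeps the overhead at $(\log n)^{1/2}$; any attempt to reason via joint concentration of the full $n$-site statistics would forfeit this scaling and push the final $m$-dependence of the protocol past $O(n^4\log n)$.
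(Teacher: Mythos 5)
Your acceptability argument is essentially the paper's: the per-site/per-test failure probability is pushed down to $O(1/n)$ (the paper uses $\frac{1-\beta}{4n}$ per test with a union bound rather than your product $\beta^{1/n}$, but both avoid needing independence in the soundness direction and both yield $c_1=c_4(\log n)^{1/2}$ via the Gaussian quantile from the central limit theorem). The treatment of the deterministic tests via Lemma \ref{LL5}, giving $\epsilon_2,\epsilon_3=c'/m$ unchanged, also matches.

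The significance direction is where you diverge, and your route has a genuine gap. The paper does \emph{not} take a union bound over the $n$ per-site conclusions with per-site significance $\alpha/n$. Instead it uses the monotonicity of the joint passing probability: Test \textbf{(4)} passes only if every per-site test passes, so the overall passing probability is at most each marginal per-site passing probability; hence if any single per-site property fails, the overall passing probability is already below $\alpha$, and per-site significance $\alpha$ directly yields overall significance $\alpha$. This matters quantitatively because the paper's soundness machinery for the non-i.i.d.\ adversarial case is Lemma \ref{LL4}, which is Chebyshev-based: $c(\alpha,\beta)\propto \alpha^{-1/2}$. Substituting your per-site level $\alpha/n$ there inflates the guaranteed precision by a factor $\sqrt{n}$, giving $\epsilon_4,\epsilon_5 = O(\sqrt{n}/\sqrt{m})$ rather than $O(\sqrt{\log n}/\sqrt{m})$, which destroys the claimed $\delta=c_2(\frac{\log n}{m})^{1/4}$ and the $O(n^4\log n)$ overhead. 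Your assertion that ``the same Gaussian tail estimate'' keeps the overhead at $(\log n)^{1/2}$ implicitly assumes a sub-Gaussian concentration bound for the soundness direction, which the paper never establishes (and which is exactly what Lemma \ref{LL4} replaces by Chebyshev, since the samples may be arbitrarily correlated). In the paper, the $(\log n)^{1/2}$ in $\epsilon_1,\epsilon_4,\epsilon_5$ comes solely from the enlarged test threshold $c_1/\sqrt{m}=c_4(\log n)^{1/2}/\sqrt{m}$ (forced by acceptability) plus a constant-significance fluctuation term $c_1'/\sqrt{m}$, absorbed into $c'''(\log n)^{1/2}/\sqrt{m}$ --- not from tightening the per-site significance level.
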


We choose three positive numbers $c'>0$ and $c'''>c_4>0$ as in Lemma \ref{L01}.
So, Lemma \ref{L01} guarantees the acceptability. 
Once all of the conditions in Step \textbf{(2-5)} 
with $c_1= c_4 (\log n)^{1/2}$
are satisfied in all sites, 
with the significance level $\alpha$, we can guarantee the conditions \eqref{2-13-1}-\eqref{2-13-2} with
$\epsilon_2,\epsilon_3= \frac{c'}{m}$ and $\epsilon_1=\frac{2c''' (\log n)^{1/2}}{\sqrt{m}},
\epsilon_4,\epsilon_5= \frac{c''' (\log n)^{1/2}}{\sqrt{m}}$.
Due to Proposition \ref{T1},
using a suitable isometry $U$, with the significance level $\alpha$, we can guarantee the conditions \eqref{2-13-3}-\eqref{2-13-4} with $\delta_1,\delta_2=O((\frac{\log n}{m })^{\frac{1}{4}})$, which 
yields the conditions \eqref{H15-1} and \eqref{H15-2B}.

Eqs. \eqref{H15-3} can be shown as follows.
When $\Tr \sigma (I-P_i')\ge \frac{\alpha}{m} $,
we accept the stabilizer test with respect to $P_i$
with probability smaller than $\alpha$.
So, we can guarantee that $Tr \sigma (I-P_i')\le \frac{\alpha}{m} $ with significance level $\alpha$.

\begin{proofof}{Lemma \ref{L01}}
To accept the tests $Av [X'_1 Z'_2] = 1$ and $Av [Z'_1 X'_2] = 1$ 
in all sites with more than probability $\alpha$, 
the conditions \eqref{ep4}-\eqref{ep5} of $\epsilon_2,\epsilon_3= \frac{c'}{m}$ need to hold in all sites.
More precisely, the summand of $\epsilon_2$ and $\epsilon_3$ with respect to all sites
need to be $\frac{c'}{m}$, which is a stronger condition than the above condition. 
Once these tests $Av [X'_1 Z'_2] = 1$ and $Av [Z'_1 X'_2] = 1$ are passed in all sites,
we can guarantee the conditions \eqref{ep4}-\eqref{ep5} of $\epsilon_2,\epsilon_3= \frac{c'}{m}$ in all sites
with significance level $\alpha$.

The observables in the LHS of \eqref{ep4}-\eqref{ep5} 
and the stabilizer test take a deterministic value in the ideal state,
On the other hand, the observables in the LHS of the remaining cases
\eqref{2-13-1}, \eqref{ep6}--\eqref{2-13-2} behave probabilistically even in the ideal state.
Hence, for the acceptability, we need to care about the accepting probability only for \eqref{ep6}--\eqref{2-13-2} in all sites because 
\eqref{2-13-1} follows from \eqref{ep6} and \eqref{ep7}.
In order that the ideal state accepts all of the tests 
\eqref{e1}-\eqref{e4} in all sites, i.e., totally $3n$ tests,
with probability $\beta$, 
the coefficient $c_1$  needs to increase with respect to $n$.
For example, when we choose $c_1$ to be $c_4 (\log n)^{1/2}$ with a certain constant $c_4$,
the ideal state accepts these tests with probability $\beta$ in all sites due to the following reason.
To satisfy the above condition, we consider the case when 
the ideal state accepts each test of each site with probability $1-\frac{1-\beta}{4n}$,
which implies that the ideal state accepts all of $4n$ tests with more than probability $\beta$
because the test $|Av [X'_1 X'_2+ Z'_1 Z'_2]| \le \frac{c_1}{\sqrt{m}}$ is regarded as two tests.
For example, we focus on the test $Av [A(0)'_1(Z'_2+X'_2)] \ge \sqrt{2}- \frac{c_1}{\sqrt{m}}$.
Due to the central limit theorem, the accepting probability with the ideal state 
is approximated to
$\int_{\-\infty}^{\frac{c_1}{\sqrt{v}}} \frac{1}{\sqrt{2\pi}}e^{-\frac{x^2}{2}}dx$,
where $v$ is the variance of $A(0)'_1(Z'_2+X'_2)$.
By solving the condition
$\int_{\frac{c_1}{\sqrt{v}}}^{\infty} \frac{1}{\sqrt{2\pi}}e^{-\frac{x^2}{2}}dx =\frac{1-\beta}{4n}$,
we obtain $c_1=c_4 (\log n)^{1/2}$ with a certain constant $c_4$.
The remaining tests can be treated in the same way.
So, we can conclude that the above choice of $c_1$ guarantees the condition for the acceptance probability $\beta$.

Then, dependently of $c_4$ and $\alpha$, there exists a real number $c'''>c_4$ satisfying the following condition.
To accept the tests
$Av [A(0)'_1(Z'_2+X'_2)] \ge \sqrt{2}- \frac{c_4 (\log n)^{1/2}}{\sqrt{m}}$,
$Av [A(1)'_1(Z'_2-X'_2)]\ge \sqrt{2}- \frac{c_4 (\log n)^{1/2}}{\sqrt{m}}$, and
$|Av [X'_1 X'_2+ Z'_1 Z'_2]| \le \frac{c_4 (\log n)^{1/2}}{\sqrt{m}}$,
$Av [X'_1 Z'_2] = 1$, and $Av [Z'_1 X'_2] = 1$ in all sites with more than probability $\alpha (< \beta)$, 
we need to accept the tests
$Av [A(0)'_1(Z'_2+X'_2)] \ge \sqrt{2}- \frac{c_4 (\log n)^{1/2}}{\sqrt{m}}$,
$Av [A(1)'_1(Z'_2-X'_2)]\ge \sqrt{2}- \frac{c_4 (\log n)^{1/2}}{\sqrt{m}}$, and
$|Av [X'_1 X'_2+ Z'_1 Z'_2]| \le \frac{c_4 (\log n)^{1/2}}{\sqrt{m}}$
$Av [X'_1 Z'_2] = 1$ and $Av [Z'_1 X'_2] = 1$ in each sites with more than probability $\alpha (< \beta)$,
which implies that
the conditions \eqref{ep6}-\eqref{2-13-2} of $\epsilon_4,\epsilon_5= 
\frac{c_4 (\log n)^{1/2}+ c_1'}{\sqrt{m}}$ 
hold in each site with a certain constant $c_1'$.
Since $\frac{c''' (\log n)^{1/2}}{\sqrt{m}} > \frac{c_4 (\log n)^{1/2}+ c_1'}{\sqrt{m}}$,
the above condition implies  
the conditions \eqref{ep6}-\eqref{2-13-2} of $\epsilon_4,\epsilon_5= c''' (\frac{\log n}{m})^{1/2}$. 

Once these texts 
$Av [A(0)'_1(Z'_2+X'_2)] \ge \sqrt{2}- c_4 (\frac{\log n}{m})^{1/2}$,
$Av [A(1)'_1(Z'_2-X'_2)]\ge \sqrt{2}- c_4 (\frac{\log n}{m})^{1/2}$, and
$|Av [X'_1 X'_2+ Z'_1 Z'_2]| \le c_4 (\frac{\log n}{m})^{1/2}$ are passed in all sites,
we can guarantee the conditions 
the conditions \eqref{ep6}-\eqref{2-13-2} of $\epsilon_4,\epsilon_5= c''' (\frac{\log n}{m})^{1/2}$ 
in all sites with significance level $\alpha$.
So, by choosing $\epsilon_1= 2 \epsilon_4$, 
\eqref{2-13-1} automatically holds in all sites.
\end{proofof}

\section{Error of POVM element: Proof of Inequality \eqref{H15-4}}\label{app:errorPOVM}
Similarly to \cite{McKague2016}, we introduce
$n$ ideal trusted systems spanned by $|0\rangle ,|1\rangle$
while each untrusted system is spanned by $|1\rangle ,|-1\rangle$.
Let $U_j$ be a unitary on the trusted system.
Let $V_j$ be a unitary controlling
the $j$-th untrusted system by the trusted system, defined as follows.

The operators on the untrusted system are restricted to
$I$ and $s$ operators $\{\sD_{(i)}\}_{i=1}^s$
such that their eigenvalues are $1$ or $-1$
and $\| \sU \sD_{(i)} \sU^\dagger - \sD_{(i)} \|\le \delta$.
In the main text, $s=4$ and
$\{\sD_{(i)}\}_i=\{\sX,\sZ,\sA(0),\sA(1)\}$.
Then, we assume  $V_j$ has the form
$\sum_{k \in \bF_2^{n}}
|k\rangle \langle k| \sD_j(k)$,
where $\sD_j(k)$ is one of $I$ and $\{\sD_{(i)}\}_i$.
According to FIG.~7 of \cite{McKague2016},
we define $W_j:= U_j V_j W_{j-1},$ and $W_0=U_0$
and $\sU:= \sU_1\cdots \sU_n$.

\begin{proposition}[\protect{\cite[Lemma 6]{McKague2016}} with modification]
We have
\begin{align}
\| \sU W_j' \sU^\dagger -W_j \|
\le s j \delta .
\end{align}
\end{proposition}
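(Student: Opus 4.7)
The plan is to prove the bound by induction on $j$. The base case $j = 0$ is immediate: since $W_0 = U_0$ acts only on the trusted system it commutes with the isometry $\sU$, so $\sU W_0' \sU^\dagger$ equals $W_0$ on the image of $\sU$. For the inductive step I would first control the per-step discrepancy $\| \sU V_j' \sU^\dagger - V_j \|$ and then combine it with the inductive hypothesis through a telescoping argument.

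To bound $\| \sU V_j' \sU^\dagger - V_j \|$, the key observation is that $V_j = \sum_{k \in \bF_2^n} |k\rangle\langle k| \otimes \sD_j(k)$ with each $\sD_j(k)$ drawn from the finite set $\{I, \sD_{(1)}, \ldots, \sD_{(s)}\}$. Grouping the summands by the value of $\sD_j(k)$ yields the decomposition $V_j = \sum_{i=0}^{s} P_{j,i} \otimes \sD_{(i)}$, with $\sD_{(0)} := I$ and $P_{j,i}$ the orthogonal projector onto the span of those $|k\rangle$ for which $\sD_j(k) = \sD_{(i)}$. The $i = 0$ block is unaffected by conjugation with $\sU$, and for the remaining $s$ blocks the triangle inequality combined with the hypothesis $\| \sU \sD_{(i)}' \sU^\dagger - \sD_{(i)} \| \le \delta$ yields $\| \sU V_j' \sU^\dagger - V_j \| \le s\delta$.

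Next I would unroll the recursion $W_j' = U_j V_j' W_{j-1}'$. Because $U_j$ acts on the trusted system it commutes with $\sU$, and because $\sU^\dagger \sU$ is the identity on the untrusted primed space I may insert it between $V_j'$ and $W_{j-1}'$ to obtain $\sU W_j' \sU^\dagger = U_j (\sU V_j' \sU^\dagger)(\sU W_{j-1}' \sU^\dagger)$. Applying the telescoping identity $AB - CD = (A - C)B + C(B - D)$ with $A = \sU V_j' \sU^\dagger$, $B = \sU W_{j-1}' \sU^\dagger$, $C = V_j$, and $D = W_{j-1}$, and using $\|U_j\|, \|V_j\|, \|\sU W_{j-1}' \sU^\dagger\| \le 1$ together with the inductive hypothesis $\| \sU W_{j-1}' \sU^\dagger - W_{j-1} \| \le s(j-1)\delta$, delivers the desired $\| \sU W_j' \sU^\dagger - W_j \| \le s\delta + s(j-1)\delta = sj\delta$.

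The main technical subtlety I expect is careful bookkeeping of the isometry: $\sU \sU^\dagger$ is only the projection onto the image of $\sU$ rather than the identity, so the statement is really being controlled on states of the form $\sU |\psi'\rangle$, and one must verify that the $\le 1$ norm estimates absorb the resulting discrepancies without spoiling the bound. As a minor remark, exploiting the mutual orthogonality of the $P_{j,i}$ would actually sharpen the per-step bound to $\delta$, but the looser triangle-inequality bound $s\delta$ suffices and matches the stated factor of $s$ in $sj\delta$.
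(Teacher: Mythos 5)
Your proof is correct and follows essentially the same route as the paper's: induction on $j$ via the telescoping identity $(A-C)B + C(B-D)$, with the per-step bound $\|\sU V_j'\sU^\dagger - V_j\| \le s\delta$ obtained by grouping the control states $|k\rangle$ according to which $\sD_{(i)}$ they select and applying the triangle inequality over the $s$ groups. (The paper's displayed computation labels the controlled unitary $U_j$ rather than $V_j$, an apparent notational slip, but the content is identical to yours; your remark that orthogonality of the blocks would sharpen the per-step bound to $\delta$ is a valid observation the paper does not exploit.)
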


\begin{proof}
We have
\begin{align}
\sU W_j' \sU^\dagger -W_j & = \sU U_j' \sU^\dagger V_j \sU W_{j-1}' \sU^\dagger -U_j V_j W_{j-1} \nonumber \\
& = (\sU U_j' \sU^\dagger - U_j )V_j \sU W_{j-1} '\sU^\dagger \nonumber\\
& + U_j V_j (\sU W_{j-1}' \sU^\dagger -W_{j-1} ).
\end{align}
By induction, it is enough to show
\begin{align}
\|
\sU U_j' \sU^\dagger - U_j
\|
\le s \delta .\Label{11-25-7B}
\end{align}
We have
\begin{align}
\sU U_j' \sU^\dagger - U_j & = \sU \sum_{k \in \bF_2^{n}} |k\rangle \langle k| \sD_j(k)' \sU^\dagger \nonumber\\
& - \sum_{k \in \bF_2^{n}}|k\rangle \langle k| \sD_j(k) \nonumber \\
& = \sum_{i=1}^s \sum_{\substack{k \in \bF_2^{n}: \\ \sD_j(k)=\sD_{(i)}}}  |k\rangle \langle k| (\sU \sD_{(i)}' \sU^\dagger -\sD_{(i)})
\end{align}
For $i$, we have
\begin{align}
& \Big \|\sum_{\substack{k \in \bF_2^{n}: \\ \sD_j(k)=\sD_{(i)}}}
|k\rangle \langle k| (\sU \sD_{(i)}' \sU^\dagger -\sD_{(i)})
\Big\| \nonumber\\
=& \Big\|\sum_{\substack{k \in \bF_2^{n}: \\ \sD_j(k)=\sD_{(i)}} }|k\rangle \langle k| \Big\| \cdot \Big\|\sU \sD_{(i)}' \sU^\dagger -\sD_{(i)}\Big\|\le \delta.
\end{align}
So, we have \eqref{11-25-7B}.
\end{proof}

Assume that our adaptive measurement is given as follows.
Once we obtain the measured outcomes $k_1, \ldots, k_{j-1}$,
we measure $\sD_j(k_1, \ldots, k_{j-1}) $ on the $j$-th system.
To discuss such an adaptive measurement,
we set the initial state $|+\rangle^{\otimes n}$ on the trusted system.
Choose $U_j$ as the application of the Hadamard operator $\sH$ on the $j$-th trusted system.
Then, we define
\begin{align*}
V_j & := \sum_{k_1, \ldots, k_{j-1}}|k_1, \ldots, k_{j-1},0\rangle \langle k_1, \ldots, k_{j-1},0| \nonumber\\
& + \sD_j(k_1, \ldots, k_{j-1})|k_1, \ldots, k_{j-1},1\rangle \langle k_1, \ldots, k_{j-1},1|.
\end{align*}
Then, we define the TP-CP map $\Lambda$ from
the untrusted $n$-qubit system to the trusted $n$-qubit system as
\begin{align}
\Lambda(\rho):= \Tr_{UT}
W_n \rho \otimes |+\rangle\langle +|^{\otimes n} W_n^\dagger,
\end{align}
where $\Tr_{UT}$ expresses the partial trace with respect to the untrusted system.
Due to the construction,
$\Lambda(\rho)$ is the same as the output distribution when the above adaptive measurement is applied.

\begin{proposition}[\protect{\cite[Corollary 2]{McKague2016}} with modification]\label{P3}
For any state $\rho$,
we have
\begin{align}
\| \sU \Lambda'(\sU^\dagger  \rho \sU)\sU^\dagger
- \Lambda( \rho ) \|_1
\le 2s n  \delta.\label{25-2}
\end{align}
\end{proposition}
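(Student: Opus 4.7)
The plan is to reduce the claim to the operator-norm bound in the preceding proposition via monotonicity of the trace norm under the partial trace. First I would rewrite both sides of inequality \eqref{25-2} as partial traces of unitary-conjugated states on the combined trusted--untrusted system. Using that $\sU$ acts only on the untrusted factor and commutes with the trusted ancilla $|+\rangle\langle+|^{\otimes n}$, the outer $\sU$ and $\sU^\dagger$ can be absorbed into the chained unitaries, yielding
\begin{align*}
\sU \Lambda'(\sU^\dagger \rho \sU)\sU^\dagger &= \Tr_{UT}[A\,\tau\,A^\dagger], \\
\Lambda(\rho) &= \Tr_{UT}[B\,\tau\,B^\dagger],
\end{align*}
where $A := \sU W_n' \sU^\dagger$, $B := W_n$, and $\tau := \rho \otimes |+\rangle\langle+|^{\otimes n}$.

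Next, monotonicity of the trace norm under the partial trace gives
\begin{align*}
\|\sU \Lambda'(\sU^\dagger \rho \sU)\sU^\dagger - \Lambda(\rho)\|_1 \le \|A\tau A^\dagger - B\tau B^\dagger\|_1.
\end{align*}
A telescoping triangle inequality followed by H\"older's inequality $\|XYZ\|_1 \le \|X\|_\infty \|Y\|_1 \|Z\|_\infty$ bounds this by
\begin{align*}
\|(A-B)\tau A^\dagger\|_1 + \|B\tau(A^\dagger - B^\dagger)\|_1 \le 2\|A-B\|_\infty,
\end{align*}
using that $A$ and $B$ are isometries with operator norm one and that $\|\tau\|_1 = 1$. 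The preceding proposition then supplies $\|A - B\|_\infty \le sn\delta$, yielding the advertised $2sn\delta$.

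The main obstacle I anticipate is not the final estimate but the careful bookkeeping in the first step: explicitly writing out the trusted--untrusted tensor structure so that $\sU$, which acts only on the untrusted factor, can be moved past $|+\rangle\langle+|^{\otimes n}$ on the trusted ancilla and past the partial trace, so that the outer conjugation by $\sU$ combines with $W_n'$ inside $\Lambda'$ to form $A = \sU W_n' \sU^\dagger$. Once that identification is in place, the remaining steps are standard operator-norm manipulations and do not use any properties of $W_n$ beyond its being a product of isometries.
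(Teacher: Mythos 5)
Your proposal is correct and follows essentially the same route as the paper's own proof: the paper likewise reduces \eqref{25-2} to the operator-norm bound of the preceding proposition by writing the difference as $(\sU W_n'\sU^\dagger - W_n)\tau\, \sU {W_n'}^\dagger \sU^\dagger + W_n \tau\,(\sU {W_n'}^\dagger \sU^\dagger - W_n^\dagger)$ with $\tau = \rho\otimes|+\rangle\langle+|^{\otimes n}$, bounds each term by $\|\sU W_n'\sU^\dagger - W_n\|\le sn\delta$ via H\"older, and invokes monotonicity of the trace norm under the partial trace over the untrusted system (the ``information processing inequality''). No substantive difference.
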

Hence, when $M_i$ is a POVM element of an adaptive measurement, we have
\begin{align}
\| U M'_i U^\dagger -M_i \|
\le
\max_{\rho} \Tr ( U M'_i U^\dagger -M_i )\rho
\le 2 s n \delta ,\label{H15-4x}
\end{align}
which implies inequality \eqref{H15-4} of the main text by substituting $4$ for $s$.

\if0
\begin{proposition}[\protect{\cite[Corollary 2]{McKague2013}} with modification]
Let $\{E_k\}_{k \in \bF_2^{n}}$ be
an adaptive PVM on the untrusted system based on $I,\sX,\sY,\sZ$.
For any subset $S \subset \bF_2^{n}$
and any state $\rho$ on the untrusted system,
we have
\begin{align}
\Tr \rho
 \sU \sum_{k \in S} E_k' \sU^\dagger
-
\Tr \rho \sum_{k \in S} E_k
\le 6 n \delta .
\end{align}
\end{proposition}
\fi

\begin{proof}
We have
\begin{align}
& \sU W_j' \sU^\dagger (\rho \otimes |+\rangle\langle +|^{\otimes n}) \sU {W_j'}^\dagger \sU^\dagger-W_j(\rho \otimes |+\rangle\langle +|^{\otimes n})W_j^\dagger \nonumber \\
=& (\sU W_j' \sU^\dagger -W_j) (\rho \otimes |+\rangle\langle +|^{\otimes n}) \sU {W_j'}^\dagger \sU^\dagger \nonumber\\
+& W_j(\rho \otimes |+\rangle\langle +|^{\otimes n})(\sU {W_j'}^\dagger \sU^\dagger -W_j^\dagger).\label{16-6}
\end{align}
Also, we have
\begin{align}
&\|(\sU W_j' \sU^\dagger -W_j)(\rho \otimes |+\rangle\langle +|^{\otimes n})\sU {W_j'}^\dagger \sU^\dagger \|_1 \nonumber \\
\le & \|(\sU W_j' \sU^\dagger -W_j)\| \| (\rho \otimes |+\rangle\langle +|^{\otimes n}) \sU {W_j'}^\dagger \sU^\dagger \|_1 \nonumber\\
=& \|(\sU W_j' \sU^\dagger -W_j)\|\le s n \delta, \label{16-7}\\
& \| W_j
(\rho \otimes |+\rangle\langle +|^{\otimes n})
(\sU {W_j'}^\dagger \sU^\dagger -W_j^\dagger)\|_1 \nonumber \\
\le &
\| W_j
(\rho \otimes |+\rangle\langle +|^{\otimes n}) \|_1
\|(\sU {W_j'}^\dagger \sU^\dagger -W_j^\dagger)\| \nonumber \\
=&
\|(\sU {W_j'}^\dagger \sU^\dagger -W_j^\dagger)\| \le s n \delta.
\label{16-8}
\end{align}
Combining \eqref{16-6}, \eqref{16-7}, and \eqref{16-8},
we have
\begin{align}
& \| \sU \Lambda'(\sU^\dagger \rho \sU)\sU^\dagger-\Lambda(\rho)\|_1 \nonumber\\
\le& \|\sU W_j' \sU^\dagger(\rho \otimes |+\rangle\langle +|^{\otimes n})\sU {W_j'}^\dagger \sU^\dagger \nonumber\\
-& W_j(\rho \otimes |+\rangle\langle +|^{\otimes n})W_j^\dagger \|_1 \nonumber\\
\le& 2 s n \delta,
\end{align}
where the first inequality follows from the information processing inequality with respect to the trace of the untrusted system.
Hence, we obtain \eqref{25-2}.
\end{proof}

\section{Error in the initial state: Proof of inequality \eqref{H15-5}}\label{app:errorInitial}
In this section, we show a slightly stronger inequality than inequality \eqref{H15-5} of main text:
\begin{align}
\| U \sigma U^\dagger -|G\rangle \langle G| \|_1^2
\le 6 n \delta+ \frac{3 \alpha}{m}\label{2-16-10}
\end{align}
by assuming Inequalities (3)--(5) in Theorem 2.

Now, we have the relation
\begin{align}
& \| U \sigma U^\dagger -|G\rangle \langle G| \|_1 ^2 \nonumber\\
\stackrel{(a)}{\le}& 1 - \Tr \langle G| U \sigma U^\dagger |G\rangle = \Tr (I-|G\rangle \langle G|) U \sigma U^\dagger\nonumber  \\
\stackrel{(b)}{\le}& \Tr (I-P_1) U \sigma U^\dagger + \Tr (I-P_2) U \sigma U^\dagger \nonumber\\
+& \Tr (I-P_3) U \sigma U^\dagger ,
 \label{2-16-12}
\end{align}
where
$(a)$ follows from the relation between the trace norm and the fidelity \cite[(6.106)]{Hayashi2014}
and $(b)$ follows from the inequality
$I-|G\rangle \langle G|\le (I-P_1)+ (I-P_2)+ (I-P_3)$.

We can apply \eqref{H15-4x} with $s=2$ to $P_i$ because
$P_i$ is a POVM element of an adaptive measurement based on $X$ and $Z$.
So, we have
\begin{align}
& |\Tr (\sU (I-P_i')\sU^\dagger -(I-P_i) ) \sU \sigma \sU^\dagger| \nonumber\\
=& |\Tr (\sU P_i'\sU^\dagger -P_i) \sU \sigma \sU^\dagger| \nonumber\\
\le& \| \sU P_i'\sU^\dagger -P_i \| \le 2 n \delta.
\end{align}
Thus, Inequality (9) in Theorem 2 implies
\begin{align}
& \Tr (I-P_i) \sU \sigma \sU^\dagger \nonumber\\
=& \Tr ((I-P_i) -\sU (I-P_i')\sU^\dagger) \sU \sigma \sU^\dagger \nonumber\\
+& \Tr (I-P_i') \sigma \le 2 n \delta+\frac{\alpha}{m}.\Label{2-16-11}
\end{align}
The combination of \eqref{2-16-12}
and \eqref{2-16-11} yields \eqref{2-16-10}.

\section{Interactive proof system}\label{app:int-proof}
Although not explicitly stated in the main part, our protocol is an instance of an interactive proof system \cite{Babai1985,Goldwasser1989} for any language in \textbf{BQP} with a quantum prover (Bob) and a nearly-classical verifier (Alice) equipped with a random number generator.
More formally, for every language $L\in\textbf{BQP}$ and input $x$ there exists a poly($|x|$)-time verifier $V$ interacting with a poly($|x|$) number of quantum provers such that if $x\in L$, there exists a set of honest provers for which $V$ accepts with probability at least $c=2/3$ (completeness).
If $x\notin L$, then for any set of provers, $V$ accepts with probability at most $p_{incorrect}\leq s=1/3$ (soundness), where $p_{incorrect}$ is given by Eq.~(12) of the main part.

\section{Self testing with multi-colorable graph}\label{app:multicolor}
Now, we give a protocol for $k$-colorable graph state as follows.
For each color $i=1, \ldots, k$,
we divide the set $S_i$ of sites with color $i$ into subsets
$S_{i,1}, \ldots, S_{i,l_i}$ such that
there is no common neighborhood with non-$i$ color
for each subset $S_{i,j}$.
Then, as a generalization of B-protocol, we propose the $i$-protocol with the subset $S_{i,j}$ as follows.

\begin{description}
\item[(3-1)]
We prepare $8 m$ states $|G'\rangle$.

\item[(3-2)]
We measure $Z'$ on all sites of $S_i\setminus S_{i,j}$
for all copies.
Then, we apply $Z'$ operators on the remaining sites to correct applied
$Z'$ operators depending on the outcomes.

\item[(3-3)]
For all $a \in S_{i,j}$,
we choose a site $b_a \in N_a$.
Then, we measure $Z'$ on all sites of $(\cup_{t \neq i}S_t)\setminus \{b_a\}_{a \in S_{i,j}}$
for all copies.
Then, we apply $Z'$ operators on the remaining sites to correct applied
$Z'$ operators depending on the outcomes.

\item[(3-4)]
Due to the above steps,
the resultant state should be
$ \otimes_{a \in S_{i,j}}|\Phi'\rangle_{a b_a}$.
Then, we apply the self-testing procedure to
all of $\{|\Phi'\rangle_{a b_a}\}_{a \in S_{i,j}}$.
\end{description}
The above protocol verifies the measurement device on sites with $i$-th color.
Then, applying this generalization to all colors in the protocol, we can
extend the first part.

To realize the second part, for each color $i$, we measure non-$i$ color
sites with $Z$ basis and check whether the outcome of measurement $X$
on the sites with color $i$ is the same as the predicted one.
Then, we denote the projection corresponding to the passing event for this test by
$P_i$.
Hence, we have $\prod_{i=1}^k P_i= |G\rangle \langle G|$ because
only the state $|G\rangle$ can pass all of these tests.
Thus, applying this test for all colors,
we can test whether the state is the desired graph state.

Then, choosing $c_3$ to be $k+8(\sum_{i=1}^k l_i)$, we propose our self-testing protocol as follows,

\begin{description}
\item[(4-1)]
We prepare $c_3 m+1$ $n$-qubit states $|G'\rangle$.

\item[(4-2)]
We randomly divide the $c_3 m + 1$ copies into $c_3+1$ groups.
The first $c_3$ groups are composed of $m$ copies and the final group is composed of a single copy.

\item[(4-3)]
For the first $k$ groups, we apply the following test.
For the $i$-th group,
we measure $Z'$ on the sites with non-$i$ color and $X'$ on the sites with $i$-th color,
and check that the outcome of $X'$ measurements
is the same as predicted from the outcomes of $Z'$ measurements.

\item[(4-4)]
We run the $i$-protocol with $S_{i,j}$
for the $k+8(j-1+\sum_{i'=1}^{i-1}l_{i'})+1$-th
-$k+8(j+\sum_{i'=1}^{i-1}l_{i'})$-th groups.
Then, we check 8 conditions in Step {\bf (2-5)}.
We repeat this protocol for $j=1, \ldots, l_i$
and $i=1, \ldots, k$.
\end{description}

When we employ the above protocol for $k$-colorable case,
the difference from the 3-colorable case is only the number of samples.
we have the same analysis for the certification of computation result
as the 3-colorable case.

\bibliography{biblio}
\end{document}